\documentclass[DIV=classic,a4paper,10pt]{myart}

\linespread{1.1}
\KOMAoptions{DIV=last}

\usepackage{caption}
\usepackage{subcaption}
\usepackage{graphicx}
\usepackage{float}

\usepackage[normalem]{ulem}

\usepackage{accents}
\newcommand{\ubar}[1]{\underaccent{\bar}{#1}}

\begin{document}

\title{Computational aspects of robust optimized certainty equivalents and option pricing}
\tnotetext[t]{We thank Martin Herdegen, Jan Ob{\l}{\'o}j, Mathias Pohl, and Johannes Wiesel
for fruitful discussions}

\author[a,1,w]{Daniel Bartl}
\author[b,2,u]{Samuel Drapeau}
\author[c,3,v]{Ludovic Tangpi}

\address[a]{Department of Mathematics, University of Konstanz, Universit{\"a}tsstra{\ss}e 10, 78464 Konstanz- Germany.}
\address[b]{School of Mathematical Sciences and Shanghai Advanced Institute of Finance (SAIF/CAFR), Shanghai Jiao Tong University, 
211 West Huaihai road, Shanghai, P.R. 200030 China.}
\address[c]{Department of Operations Research and Financial Engineering, Princeton University, Charlton Street Princeton, NJ 08544 USA}

\eMail[1]{daniel.bartl@uni-konstanz.de}
\eMail[2]{sdrapeau@saif.sjtu.edu.cn}
\eMail[3]{ndounkeu@princeton.edu}

\myThanks[w]{Vienna Science and Technology Fund (WWTF) through project VRG17-005 and FWF grant Y00782 are gratefully acknowledged.}
\myThanks[u]{Financial support from the National Science Foundation of China, grant 11550110184 and 11671257 are gratefully acknowledged.
Grant "Assessment of Risk and Uncertainty in Finance" number AF0710020 from Shanghai Jiao Tong University is gratefully acknowledged.}
\myThanks[v]{Financial support from Vienna Science and Technology Fund (WWTF) under Grant MA 14-008 is gratefully acknowledged.}

\abstract{
    Accounting for model uncertainty in risk management and option pricing leads to infinite dimensional optimization problems which are both analytically and numerically intractable.
    In this article we study when this hurdle can be overcome for the so-called optimized certainty equivalent risk measure (OCE) -- including the average value-at-risk as a special case.
    First we focus on the case where the uncertainty is modeled by a nonlinear expectation which penalizes distributions that are ``far'' in terms of optimal-transport distance (Wasserstein distance for instance) from a given baseline distribution.
    It turns out that the computation of the robust OCE reduces to a finite dimensional problem, which in some cases can even be solved explicitly.
    This principle also applies to the shortfall risk measure as well as for the pricing of European options.
    Further, we derive convex dual representations of the robust OCE for measurable claims without any assumptions on the set of distributions.
    Finally, we give conditions on the latter set under which the robust average value-at-risk is a tail risk measure.
}

\date{\today}
\keyAMSClassification{91G80, 90B50, 60E10, 91B30}
\keyWords{Optimized certainty equivalent; optimal transport; Wasserstein distance; penalization, distribution uncertainty; convex duality; average value-at-risk; robust option pricing.}
\ArXiV{1706.10186}
\maketitle

\section{Introduction}
In this article we study properties of the optimized certainty equivalent (OCE for short) of Ben-Tal and Teboulle \cite{ben-tal02,Ben-Teb}, and to a wider extent option pricing, under model uncertainty.
In the context of risk assessment, the rationale behind the definition of the OCE is as follows.
Assume that a financial agent faces a future uncertain loss profile with distribution $\mu$, and she wants to assess its risk.
In her assessment, given a loss function $l\colon\mathbb{R}\to (-\infty,\infty]$, she computes the expectation $\int_{}^{} l(x)\mu(dx)$ representing the present average cost of her losses.
She can however reduce her overall future losses by allocating some cash $m$, resulting in the present value $\int_{}^{}l(x-m)\mu(dx)+m$.
Minimizing over all possible allocations defines the optimal cost or OCE of $\mu$ with respect to the loss function $l$:
\begin{equation}
    \label{eq:oce_classical}
    \mathrm{OCE}(l):=\inf_{m \in \mathbb{R}}\left( \int l(x-m)\mu(dx)+m \right).
\end{equation}
Now, if the distribution $\mu$ of future loss is not perfectly known, the risk averse financial agent will consider the overall cost of an allocation $m$ to be given by $\mathcal{R}(l(\cdot-m))+m$, where $\mathcal{R}$ is a nonlinear expectation.
To wit, $\mathcal{R}$ models the degree of conservatism or risk aversion of the investor and can be interpolated between the linear case $\mathcal{R}(\cdot) = \int \cdot\,d\mu$ and the worst case $\mathcal{R}(\cdot) = \sup_{\mu \in {\cal M}_1(\mathbb{R})}\int \cdot \,d\mu$, with ${\cal M}_1(\mathbb{R})$ the set of probability measures on $\mathbb{R}$.
Hence, the natural definition of the robust optimized certainty equivalent is
\begin{equation}
    \label{eq:roboce intro}
    \mathcal{OCE}(l) :=\inf_{m\in\mathbb{R}} \left( \mathcal{R}( l(\cdot-m))+m  \right),
\end{equation}
that is, the minimal allocation cost when the future expected loss is written in terms of the nonlinear expectation $\mathcal{R}$.

The classical OCE satisfies sound economical properties discussed in \cite{Ben-Teb}.
In particular, it is a convex monetary risk measure in the sense of \citet{artzner01} and \citet{foellmer02} which is additionally law-invariant, see \citet{fritelli03} for definition and consequences.
Furthermore, depending on the specification of the loss function $l$, it includes classical risk measures such as the entropic risk measure, the average value-at-risk, see \citet{Roc-Ury}, the monotone mean variance of \citet{MMR2006} and as a scaling limit, the shortfall risk measure of \citet{foellmer02}.
Stated as a classical unconstrained one dimensional optimization problem, the OCE is a smooth quantification instrument, see \citet{cheridito02} and \citet{kupper01}.
The computation of the risk as well as the risk contributions can be explicitly stated in terms of first order conditions and efficiently implemented using Fourier transform methods, see \citet{Dra-Kup-Pap}, and stochastic root finding methods, see \citet{Ham-Sal-Web14} as well as \citet{Dun-Web10}.
However when facing model uncertainty, these properties become a-priori challenging due to the potential infinite dimensional nature of the optimization problem \eqref{eq:roboce intro}.
In addition, it is not clear by how much the resulting robust quantification of risk deviates from its non robust counterpart, a crucial question in practice.

The goal of this paper is to study the robust OCE and provide several ways to reduce the complexity stemming from the robustness to get explicit formulas allowing for a quantification of the risk under model misspecification.
Our first main result focuses on the case where 
\begin{equation*}
    {\cal R}(f) := \sup_{\mu \in {\cal M}_1(\mathbb{R})}\left( \int f\,d\mu - \varphi\left(d_c\left(\mu_0,\mu\right)\right)\right),
\end{equation*}
with $\varphi$ being a positive function, $d_c$ an optimal transport-like distance with cost function $c(x,y)$ such as the Wasserstein distance and $\mu_0$ a fixed, given distribution.
The underlying intuition is that, using past information for instance, one knows a priori that a distribution $\mu_0$ is \emph{likely} to be the true distribution of the financial loss whose risk is being assessed.
Due to uncertainty about this estimation, one considers every possible distribution, penalizing however those that are \emph{``far away''} from the baseline distribution $\mu_0$ in terms of the distance $d_c$.
See further concrete financial motivations for such an approach in Remark \ref{rem:c.bounded.or.different.metrik}.
After proving the general duality formula
\begin{align} 
    \label{eq:rep.ball_intro}
    \sup_{\mu \in\mathcal{M}_1(\mathbb{R})}  \left( \int\nolimits f\,d\mu - \varphi\left(d_c\right(\mu_0,\mu\left)\right)\right) =\inf_{\lambda\geq 0} \left( \int\nolimits f^{\lambda c}\,d\mu_0 +\varphi^\ast(\lambda) \right)
\end{align}
with $\varphi^*$ being the convex conjugate of $\varphi$
in Theorem \ref{thm:integral.neigborhood},  we show in
Theorem \ref{thm:oce.and.es.robust.compute} that
\begin{equation*}
    \mathcal{OCE}(l)=\inf_{\lambda\geq 0} \left( \mathrm{OCE}\left(l^{\lambda c}\right) + \varphi^*(\lambda)\right),  \quad\text{where}\quad l^{\lambda c}(x):=\sup_{y\in\mathbb{R}}\left( l(y)-\lambda c(x,y)\right).
\end{equation*}
This formula for the robust OCE shows that the \emph{infinite dimensional} optimization problem of computing $\mathcal{OCE}(l)$ simplifies into a finite dimensional problem of computing the OCE of the distribution $\mu_0$ and with modified loss function $l^{\lambda c}$.

We stress that the duality formula \eqref{eq:rep.ball_intro} is interesting on its own, and is valid for measurable functions $f$, lower semicontinuous cost and penalty functions.
Adjusting the penalty function $\varphi$ enables to set the level of risk aversion.
A particular case considered in the literature arises when choosing $\varphi := \infty1_{(\delta,\infty]}$ for some $\delta>0$, as the nonlinear expectation $\mathcal{R}$ becomes
\begin{equation*}
    \mathcal{R}(f) = \sup_{\left\{\mu\in {\cal M}_1(\mathbb{R})\colon d_c(\mu_0,\mu)\le \delta\right\}}\int f\,d\mu,
\end{equation*}
the worst case expectation over a ball around the baseline distribution $\mu_0$.
In this case, taking $d_c$ to be the first order Wasserstein distance and $l(x) = x^+/\alpha$ so that the OCE becomes the average value-at-risk at level $\alpha$, one has
\begin{equation}
    \label{eq:rob avar intro}
    \mathcal{AV@R}_\alpha =\mathrm{AV@R}_\alpha(\mu_0) + \delta/\alpha,
\end{equation}
see Example \ref{ex:avar.formula} for different penalizations $\varphi$ and different distances.
In other terms, the robust average value-at-risk is the same as the standard average value-at-risk plus an ``uncertainty premium'' $\delta/\alpha$.
The above formula was obtained by \citet{Wozabal14} in the case where the existence of a dominating probability measure is assumed.
In the particular case where $\varphi = \infty1_{(\delta,\infty]}$, convex dual representations of $\mathcal{R}$ and applications thereof to stochastic programming have been recently studied.
The first contribution is due to \citet{Esfahani} who derive the duality formula when $d_c$ is the first order Wasserstein distance, $\mu_0$ the empirical measure, and the function $f$ is convex with a special structure. 
We refer to \citet{Zhao-Guan} for a similar setting and another class of objective functions $f$.
\citet{Bla-Mur} and \citet{Gao-Kle} obtain the result on a general Polish space and for lower semicontinuous cost functions $c$ and upper semicontinuous integrable functions $f$.
The proof given in the present paper is essentially more direct.
It relies on a version of Choquet's capacitability theorem and applies for measurable functions.

The application of this duality goes well beyond dimension reduction for robust optimized certainty equivalents.
It also enables us to derive a finite dimensional representation of the robust shortfall risk measure of \citet{foellmer02}.
Interestingly, a \emph{``martingale version''} of \eqref{eq:rep.ball_intro} provides a finite dimensional analytical formula for robust European option pricing, see Proposition \ref{prop:pricing.dual}.
As an example, if we consider the price $\mathrm{CALL}(k)$ of a call option on a stock with time one price $S$ and time zero price $s$ and risk neutral distribution $\mu_0$, for the robust price of a linearly penalized call, we obtain
\begin{equation*}
    \begin{split}
        \mathcal{CALL}(k) := & \sup_{\left\{\mu \in \mathcal{M}_1(\mathbb{R}^d)\colon   \int_{\mathbb{R}^d} S \,d\mu= s\right\}} \left( \int (x-k)^+\,d\mu-d_c(\mu_0,\mu)\right)\\
        =&\inf_{\alpha\in\mathbb{R}}\left( \mathrm{CALL}\left( k-\frac{2\alpha +1}{2}\right)+\frac{\alpha^2}{2}  \right)
    \end{split}
\end{equation*}
when $d_c$ is the second order Wasserstein distance.
In other terms, robust pricing of European options with respect to penalization of a Wasserstein distance simply boils down to optimizing the payoff -- here the strike -- of the option price with respect to the baseline risk neutral distribution.

Coming back to the optimized certainty equivalent, we further investigate alternative representations of the robust OCE when the nonlinear expectation ${\cal R}$ is the worst case expectation over an arbitrary set ${\cal D}\subseteq {\cal M}_1(\mathbb{R})$. 
In particular, we derive the representation of the robust $AV@R$ as a tail risk measure.
This representation, first proved by \citet{Roc-Ury} in the non-robust case, has important applications in optimization problems and does not carry over to the robust case unless stronger structural and topological assumptions are put on the set ${\cal D}$, see Proposition \ref{prop:avar-int}.
Under such assumptions, we derive a Kusuoka-type representation (see \citet{Jou-Sch-Tou06} and \citet{Jou-Sch-Tou06}) for ``robustifications'' of law-invariant risk measures, see Corollary \ref{cor:kusuoka}.
Finally, when defined on random variables, convex dual representations of OCE are particularly relevant, see for instance \citet{kupper01} for applications to optimization problems and \citet{OCE-PDE} for applications to dynamic representations.
In the present article, we derive a dual representation of the robust OCE on the set of bounded measurable random variables, without any topological assumption on the sample space.

Financial modeling under model ambiguity, known as robust finance, is currently the subject of intensive research.
Earliest papers dealing with robust risk measures with the ambiguity set given by a Wasserstein ball include \citet{Pflug-Pich-Woz} and \citet{Wozabal14} in the context of portfolio optimization problems.
Among other more recent papers, we refer to \cite{Hob98,robhedging,bei-hl-pen,DS_transport} for superhedging problems and 
\cite{Bartl_Utilitymax,carassus2016robust,Mat-Pos-Zhou15, Nutz_Robutility16, Neu-Sik} for robust utility maximization.
Distributionally robust problems are also studied in statistics, economics and operations research, 
see for instance \cite{huber,Han-Sar01}.

The paper is organized as follows: The next section summarizes our main findings. 
Namely, a duality result for the nonlinear operator ${\cal R}$ for general penalty functions and measurable cost functions.
This result is then used to provide finite dimensional representations of the OCE, shortfall risk measure and price of European options in the presence of model ambiguity.
Furthermore, we give conditions under which the robust $AV@R$ is a tail-risk measure, and a convex dual representation when the OCE is defined on random variables.
We also give several edifying examples.
In the final section, we provide detailed proofs.

\section{Main results for uncertainty given by Wasserstein distances} 
\label{sec:main.results}    
We start this section with briefly defining our notation, a short review on the distances emerging from optimal transport, and the duality theorem hinted in \eqref{eq:rep.ball_intro}.
Throughout, let $d \in \mathbb{N}$ and $X\subseteq \mathbb{R}^d$ be a closed set.
We denote by $\mathcal{M}_1(X)$ the set of all probabilities on the Borel $\sigma$-field of $X$.
For a measurable function $f\colon X\to(-\infty,\infty]$ bounded from below and $\mu\in\mathcal{M}_1(X)$, we denote by $\int f\,d\mu=\int_X f(x)\,\mu(dx)\in(-\infty,\infty]$ the integral of $f$ against $\mu$.
Now fix some measurable function $c\colon X\times X \to[0,\infty]$ and define the cost of transportation by
\begin{equation*}
    d_c(\mu,\nu):=\inf\left\{ \int\nolimits_{X\times X} c(x,y)\,\pi(dx,dy) : 
        \begin{array}{l}
            \pi\in\mathcal{M}_1(X\times X) \text{ such that}\\
            \pi(\cdot\times X)=\mu \text{ and } \pi(X \times\cdot)=\nu
        \end{array}
\right\}
\end{equation*}
for $\mu,\nu\in\mathcal{M}_1(X)$.
If $c$ is assumed to be lower semicontinuous, this problem has a dual formulation
\begin{align}
    \label{eq:dual.for.transport}
    d_c(\mu,\nu)=
    \sup\left\{ \int\nolimits_X f\,d\mu +\int\nolimits_X g\,d\nu :
        \begin{array}{l}
            f,g\colon X\to \mathbb{R}\text{ bounded continuous such that}\\
            f(x)+g(y)\leq c(x,y)\text{ for all }x,y\in X
    \end{array} \right\},
\end{align}
see for instance \cite[Theorem 5.9]{villani2008optimal} for a proof.
\begin{remark}
    For the specific choice $c(x,y)=|y-x|^p$ with $p\geq 1$, the function $d_c^{1/p}$ defines the celebrated Wasserstein distance of order $p$. In case of compact $X$, this is in fact a metric on $\mathcal{M}_1(X)$ compatible with the weak topology, that is the coarsest topology making all mappings $\mu\mapsto \int f\,d\mu$ continuous for continuous and bounded $f$.
    However, for general $X\subset\mathbb{R}^d$, one has $d_c(\mu_n,\mu)\to 0$ if and only if $\mu_n$ converges weakly to $\mu$ and $\int_X |x|^p\,\mu_n(dx)\to \int_X |x|^p\,\mu(dx)$, see \cite[Theorem 6.8]{villani2008optimal}.
\end{remark}
For a function $f\colon X\to(-\infty,\infty]$ and $\lambda\geq 0$ it's $\lambda c$-transform is defined and denoted by
\begin{equation*}
   f^{\lambda c}(x):=\sup\left\{ f(y)-\lambda c(x,y) : y\in X \text{ such that } f(y)<\infty\right\}. 
\end{equation*}
\begin{remark}
    If $f$ and $c$ are continuous, then $f^{\lambda c}$ is lower semicontinuous.
    In general, if $f$ is only assumed to be measurable, then $f^{\lambda c}$ is not necessarily.
    However, if $f$ is measurable, it follows for instance from \cite[Section 7]{bertsekas1978stochastic} that $f^{\lambda c}$ is universally measurable and in particular $\mu$-measurable for every $\mu\in\mathcal{M}_1(X)$.
    This implies that the integral $\int f^{\lambda c}\,d\mu$ is well defined whenever $f$ is measurable and bounded from below.
    Note also that $f^{\lambda c}$ is a well-known modification of the classical Fenchel-Legendre transform studied in the context of optimal transport under the name ``$c$-transform'', see for instance \cite[Section 5]{villani2008optimal}.
\end{remark}

From now on, we call
\begin{itemize}
    \item a \emph{cost function} any lower semicontinuous function $c\colon X\times X \to [0,\infty]$ with $\inf_{y\in X}c(x,y)=0$ for all $x\in X$, and for every $r\geq 0$ there is $k\geq 0$ such that $c(x,y)\geq r$ if $|x-y|\geq k$;
    \item a \emph{penalization function} any convex, increasing lower semicontinuous function $\varphi\colon [0,\infty]\to[0,\infty]$ with $\varphi(0)=0$ and neither $\varphi$ nor $\varphi^\ast$ being constant $0$.\footnote{$\varphi^\ast$ denotes the convex conjugate of $\varphi$, that is, $\varphi^\ast(y)=\sup_x(xy-\varphi(x))$.}
\end{itemize}
\begin{remark}
    For the typical case when the cost function depends only on the difference -- with slight abuse of notations $c(y-x)=c(x,y)$ -- the assumptions of a cost function are satisfied whenever $\inf_{x\in X} c(x)=0$ and $\liminf_{|x|\to\infty} c(x)=\infty$.
    In particular, $c(x,y):=|y-x|^p$ with $p>0$ corresponding to the Wasserstein distances are all cost functions.
\end{remark}
\begin{theorem}
    \label{thm:integral.neigborhood}
    Given a closed set $X\subset \mathbb{R}^d$, a cost function $c$ and a penalization function $\varphi$, then it holds that
    \begin{equation} 
        \label{eq:rep.ball}
        \sup_{\mu \in\mathcal{M}_1(X)}  \left( \int\nolimits_X f\,d\mu - \varphi\left(d_c\left(\mu_0,\mu\right)\right)\right)=
        \inf_{\lambda\geq 0} \left( \int\nolimits_X f^{\lambda c}\,d\mu_0 +\varphi^\ast(\lambda) \right)
    \end{equation}
    for every measurable function $f\colon X\to(-\infty,\infty]$ bounded from below.
    Moreover, the infimum over $\lambda$ is attained.
\end{theorem}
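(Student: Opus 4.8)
The plan is to bypass any infinite--dimensional minimax argument and instead reduce \eqref{eq:rep.ball} to a one--dimensional Fenchel duality, isolating the only genuinely measure--theoretic step.

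First I would dispose of the inequality ``$\le$'', which needs no regularity. The definition of the $\lambda c$--transform gives the pointwise bound $f(y)\le f^{\lambda c}(x)+\lambda c(x,y)$ whenever $f(y)<\infty$; integrating this against an arbitrary coupling $\pi$ of $\mu_0$ and $\mu$ and then taking the infimum over $\pi$ yields $\int_X f\,d\mu\le \int_X f^{\lambda c}\,d\mu_0+\lambda\,d_c(\mu_0,\mu)$. Combined with the Fenchel--Young inequality $\lambda\,d_c(\mu_0,\mu)\le \varphi(d_c(\mu_0,\mu))+\varphi^\ast(\lambda)$, taking the supremum over $\mu$ and then the infimum over $\lambda\ge 0$ produces ``$\le$''.

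The heart of the matter is the inner identity, for fixed $\lambda\ge 0$,
\[
  \sup_{\mu\in\mathcal M_1(X)}\Bigl(\int_X f\,d\mu-\lambda\,d_c(\mu_0,\mu)\Bigr)=\int_X f^{\lambda c}\,d\mu_0 .
\]
I would prove it by expanding $d_c$ as an infimum over couplings, so that (for $\lambda\ge 0$) the left-hand side becomes a single supremum over couplings $\pi$ with first marginal $\mu_0$ and free second marginal, and then disintegrating $\pi(dx,dy)=\mu_0(dx)\,\pi_x(dy)$. This reduces the claim to interchanging the supremum over the kernels $\pi_x$ with the integral against $\mu_0$, i.e.\ to selecting measurable $\varepsilon$--maximisers of the map $y\mapsto f(y)-\lambda c(x,y)$. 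This is precisely the point at which Choquet's capacitability theorem enters: for continuous or upper semicontinuous $f$ the selection is the classical measurable maximum theorem, whereas for a merely measurable $f$ the value $f^{\lambda c}$ is only universally measurable (see \cite{bertsekas1978stochastic}) and the selection must be carried out by a Jankov--von Neumann type argument. I expect this interchange to be the main obstacle.

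Granting the inner identity, I would reduce everything to one dimension. Setting $\hat G(s):=\sup\{\int_X f\,d\mu:\ d_c(\mu_0,\mu)\le s\}$, convexity of $\mu\mapsto d_c(\mu_0,\mu)$ and linearity of $\mu\mapsto\int_X f\,d\mu$ show that $\hat G$ is concave and nondecreasing on $[0,\infty)$. A short computation using that $\varphi$ is nondecreasing gives that the left-hand side of \eqref{eq:rep.ball} equals $\sup_{s\ge 0}(\hat G(s)-\varphi(s))$, while the inner identity rewrites as $\int_X f^{\lambda c}\,d\mu_0=\sup_{s\ge 0}(\hat G(s)-\lambda s)$, so that the right-hand side of \eqref{eq:rep.ball} equals $\inf_{\lambda\ge 0}\bigl(\sup_{s\ge 0}(\hat G(s)-\lambda s)+\varphi^\ast(\lambda)\bigr)$. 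The identity \eqref{eq:rep.ball} is now exactly Fenchel duality between the concave $\hat G$ and the convex $\varphi$. Since $\hat G$ is concave it is continuous on the interior of its domain, which provides the constraint qualification removing the duality gap and, crucially, forcing attainment of the infimum over $\lambda$; the hypotheses that neither $\varphi$ nor $\varphi^\ast$ vanishes identically guarantee properness, and the degenerate case $\hat G\equiv+\infty$ on $\{\varphi<\infty\}$, where both sides are $+\infty$, is handled separately.
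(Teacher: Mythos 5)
Your outline is workable, but it is a genuinely different proof from the one in the paper, and it is worth being precise about where the two diverge. The paper never establishes your ``inner identity'' $\sup_\mu(\int f\,d\mu-\lambda d_c(\mu_0,\mu))=\int f^{\lambda c}\,d\mu_0$ for fixed $\lambda$, and it performs no measurable selection at all. Instead it takes the right-hand side of \eqref{eq:rep.ball} as the primitive object, $\Phi(f):=\inf_{\lambda\ge0}(\int f^{\lambda c}\,d\mu_0+\varphi^\ast(\lambda))$, checks that $\Phi$ is monotone, convex, continuous from above on $C_b$ (Dini plus tightness of $\mu_0$ and the coercivity of $c$) and continuous from below on all measurable functions bounded from below, identifies $\Phi^\ast_{C_b}=\Phi^\ast_{U_b}=\varphi(d_c(\mu_0,\cdot))$ via Kantorovich duality, and then invokes the Choquet-type representation of Theorem \ref{thm:choquet} to pass from continuous to arbitrary measurable $f$ in one stroke; attainment of $\lambda$ falls out of the continuity-from-below step applied to the constant sequence $f_n=f$. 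So your sentence attributing the selection of $\varepsilon$-maximisers to ``Choquet's capacitability theorem'' conflates two different tools: in the paper Choquet plays the role of a monotone-class/regularity theorem extending the duality beyond $C_b$, whereas the selection you need is the Jankov--von Neumann theorem in the form of \cite[Propositions 7.47 and 7.50]{bertsekas1978stochastic} (the value $f^{\lambda c}$ is upper semianalytic and admits universally measurable $\varepsilon$-optimal selectors); this is essentially the route of \cite{Bla-Mur} and \cite{Gao-Kle}, which the paper explicitly sets out to avoid. Your route buys a more constructive argument (near-optimal $\mu$ are exhibited as pushforwards $\mu_0\circ y_\varepsilon^{-1}$) and a clean separation between the measure-theoretic step and a one-dimensional Fenchel duality in $s=d_c(\mu_0,\mu)$, from which attainment of $\lambda^\ast\ge0$ follows by Rockafellar's finite-dimensional duality once one notes that $\varphi^\ast\not\equiv0$ forces $\varphi<\infty$ on some $[0,\delta]$ and that $\lambda<0$ is automatically excluded. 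The price is twofold, and both points need explicit care in a full write-up: first, when $\int c(x,y_\varepsilon(x))\,\mu_0(dx)=\infty$ the quantity $\int f\,d\mu-\lambda d_c(\mu_0,\mu)$ is of the form $\infty-\infty$, so the selector must be truncated or mixed with the identity coupling; second, your ``degenerate case'' is only exhaustive because concavity of $\hat G$ (mix any competitor with $\mu_0$) shows that $\hat G=+\infty$ at a single point forces $\hat G\equiv+\infty$ on $(0,\infty)$, which meets $\{\varphi<\infty\}\setminus\{0\}$ and hence makes both sides of \eqref{eq:rep.ball} equal to $+\infty$ simultaneously.
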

\begin{example}
    Typical examples of penalization function $\varphi$ we have in mind:
    \begin{itemize}[fullwidth]
        \item In case of $\varphi=\infty 1_{(\delta,\infty]}$ for some fixed $\delta>0$, one computes $\varphi^\ast(\lambda)=\delta\lambda$ so that
            \begin{equation}
                \label{eq:integral.formula.neighborhood}
                \sup_{\left\{\mu \in\mathcal{M}_1(X) \colon d_c(\mu_0,\mu))\leq\delta\right\}}  \int f\,d\mu =\inf_{\lambda\geq 0} \left( \int f^{\lambda c}\,d\mu_0 + \delta\lambda \right).
            \end{equation}
        \item
            For $\varphi(x)=x$ one gets $\varphi^\ast=\infty 1_{(1,\infty)}$ and as $f^{\lambda c}\leq f^{c}$ for all $\lambda\leq 1$, the formula simplifies to
            \begin{equation*}
                \sup_{\mu \in\mathcal{M}_1(X)}  \left( \int f\,d\mu - d_c(\mu_0,\mu))\right)=\int f^{c}\,d\mu_0.  
            \end{equation*}
        \item
            Other examples might include $\varphi(x)=x^p/p$ for some $p>1$ for which $\varphi^\ast(\lambda)=\lambda^q/q$ where $1/p+1/q=1$, or 
            $\varphi(x)=\exp(x)-1$ for which $\varphi^\ast(\lambda)=\lambda\log\lambda-\lambda+1$.
    \end{itemize}
\end{example}
\begin{remark}
    Note that the formula \eqref{eq:integral.formula.neighborhood} has recently been proven by several authors, under different assumptions.
    In \cite{Esfahani} and \cite{Zhao-Guan}, the authors focus on the case $c(x,y)= |x-y|$ and $\mu_0$ an empirical measure.
    The closest set of assumptions to ours is in \cite{Bla-Mur}.
    Therein, the authors work on a general Polish space $X$, assume $c$ to be lower semicontinuous and real-valued, and prove duality for ($\mu_0$-integrable) upper semicontinuous functions $f$. See also \cite{Gao-Kle}.
    Further, note that the techniques differ among all proofs. 
    Our proof builds on convex and Choquet's regularity result for functional defined on measurable functions and is significantly shorter.
\end{remark}

\subsection{Optimized certainty equivalents and Shortfall risk measures}

Throughout this section let $X=\mathbb{R}$ and $\mu_0\in\mathcal{M}_1(\mathbb{R})$ be some fixed baseline distribution. 
For any ``loss'' function $l\colon\mathbb{R}\to(-\infty,\infty]$ measurable and bounded from below, recall that the optimized certainty equivalent and the shortfall risk measure with respect to $\mu_0$ are defined by
\begin{equation*}
   \mathrm{OCE}(l) :=\inf_{m\in\mathbb{R}}  \left( \int l(\cdot-m))\,d\mu_0 +m  \right) \quad\text{and}\quad \mathrm{ES}(l) :=\inf\left\{ m\in\mathbb{R} : \int l(\cdot-m) \,d\mu_0\leq 0 \right\}. 
\end{equation*}
For the remainder we fix a cost function $c$, which for notational simplicity depends only on the difference, and a penalization function $\varphi$.
Given $f\colon\mathbb{R}\to(\infty,\infty]$ measurable and bounded from below, define
\begin{equation*}
    \mathcal{R}(f):= \sup_{\mu\in\mathcal{M}_1(\mathbb{R})}\left( \int f\,d\mu- \varphi\left( d_c(\mu_0,\mu)) \right)\right). 
\end{equation*}
and the robust optimized certainty equivalent / shortfall risk measure
\begin{equation*}
    \mathcal{OCE}(l) :=\inf_{m\in\mathbb{R}}  \left( \mathcal{R}(l(\cdot-m)) +m  \right) \quad\text{and}\quad \mathcal{ES}(l) :=\inf\big\{ m\in\mathbb{R} : \mathcal{R}(l(\cdot-m))\leq 0 \big\}
\end{equation*}
for $l\colon\mathbb{R}\to(-\infty,\infty]$ measurable and bounded from below.

The following is the first main result of this section which states that the infinite dimensional problem of computing the quantities $\mathcal{OCE}(l)$ and $\mathcal{ES}(l)$ is in fact a finite dimensional problem with different loss function $l$.

\begin{theorem}
    \label{thm:oce.and.es.robust.compute}
    Given a loss function $l\colon\mathbb{R}\to\mathbb{R}$ measurable and bounded from below, 
    it holds that
    \begin{equation*}
        \mathcal{OCE}(l)=\inf_{\lambda\geq0} \left( \mathrm{OCE}\left(l^{\lambda c}\right) + \varphi^\ast(\lambda)\right).
    \end{equation*}
    If further $\inf_x l(x)<0$, then it holds that
    \begin{equation*}
       \mathcal{ES}(l) =\inf_{\lambda\geq0} \mathrm{ES}\left(l^{\lambda c}+\varphi^\ast(\lambda)\right). 
    \end{equation*}
\end{theorem}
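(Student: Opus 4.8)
The plan is to reduce both statements to Theorem~\ref{thm:integral.neigborhood} by unfolding the definitions and then exchanging the order of two infima. The key observation is that the robust OCE has two nested optimizations — the outer one over the allocation $m$, and the inner one hidden inside $\mathcal{R}(l(\cdot-m))$, which by Theorem~\ref{thm:integral.neigborhood} equals $\inf_{\lambda\geq0}(\int (l(\cdot-m))^{\lambda c}\,d\mu_0+\varphi^\ast(\lambda))$.

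**First I would** apply Theorem~\ref{thm:integral.neigborhood} to the function $f=l(\cdot-m)$ for each fixed $m$, obtaining
\begin{equation*}
    \mathcal{R}(l(\cdot-m))=\inf_{\lambda\geq0}\left(\int_{\mathbb{R}}\bigl(l(\cdot-m)\bigr)^{\lambda c}\,d\mu_0+\varphi^\ast(\lambda)\right).
\end{equation*}
The crucial algebraic step is to relate the $\lambda c$-transform of the shifted loss to the shifted transform of $l$. Since $c$ depends only on the difference, a direct computation gives $\bigl(l(\cdot-m)\bigr)^{\lambda c}(x)=\sup_y\bigl(l(y-m)-\lambda c(x-y)\bigr)=\sup_z\bigl(l(z)-\lambda c((x-m)-z)\bigr)=l^{\lambda c}(x-m)$, using the substitution $z=y-m$ and the translation invariance $c(x-y)=c((x-m)-(y-m))$. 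Thus the inner transform simply inherits the shift, which is what makes the dimension reduction work cleanly.

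**Plugging this in**, the robust OCE becomes a double infimum over $m$ and $\lambda$:
\begin{equation*}
    \mathcal{OCE}(l)=\inf_{m\in\mathbb{R}}\inf_{\lambda\geq0}\left(\int_{\mathbb{R}}l^{\lambda c}(\cdot-m)\,d\mu_0+m+\varphi^\ast(\lambda)\right).
\end{equation*}
Exchanging the two infima (always legitimate, as the infimum of a function over a product set is independent of the order) and recognizing that for fixed $\lambda$ the inner minimization over $m$ of $\int l^{\lambda c}(\cdot-m)\,d\mu_0+m$ is exactly $\mathrm{OCE}(l^{\lambda c})$ yields the first formula. For the shortfall statement, I would run the same unfolding inside the constraint $\mathcal{R}(l(\cdot-m))\leq0$, writing $\mathcal{ES}(l)=\inf\{m:\inf_{\lambda}(\int l^{\lambda c}(\cdot-m)\,d\mu_0+\varphi^\ast(\lambda))\leq0\}$ and then absorbing the constant $\varphi^\ast(\lambda)$ into the loss function, so that $\mathrm{ES}(l^{\lambda c}+\varphi^\ast(\lambda))=\inf\{m:\int l^{\lambda c}(\cdot-m)\,d\mu_0+\varphi^\ast(\lambda)\leq0\}$.

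**The main obstacle** I anticipate is not the infimum exchange but justifying the commutation of the outer infimum with the \emph{constraint} in the $\mathcal{ES}$ case. One must verify that $\inf\{m:\inf_\lambda h(m,\lambda)\leq0\}=\inf_\lambda\inf\{m:h(m,\lambda)\leq0\}$, where $h(m,\lambda)=\int l^{\lambda c}(\cdot-m)\,d\mu_0+\varphi^\ast(\lambda)$. The inequality $\geq$ follows because $\inf_\lambda h(m,\lambda)\leq0$ means $h(m,\lambda)\leq0$ for some $\lambda$ (using that the infimum over $\lambda$ is attained, per the last line of Theorem~\ref{thm:integral.neigborhood}); the reverse inequality is also direct. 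Here the attainment statement in Theorem~\ref{thm:integral.neigborhood} is what I expect to lean on to pass from ``infimum $\leq0$'' to ``value $\leq0$ at some $\lambda$'', and I would also need the hypothesis $\inf_x l(x)<0$ to guarantee that the sets $\{m:h(m,\lambda)\leq0\}$ are nonempty so that the shortfall risk measure is finite-valued.
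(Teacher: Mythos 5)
Your proposal is correct and follows essentially the same route as the paper: apply Theorem~\ref{thm:integral.neigborhood} to $l(\cdot-m)$, use translation invariance of $c$ to get $l(\cdot-m)^{\lambda c}(x)=l^{\lambda c}(x-m)$, and interchange the two infima (resp.\ the infimum and the constraint, via attainment of the optimal $\lambda$, for $\mathcal{ES}$). Nothing is missing.
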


\begin{remark}
    \label{rem:special.case.varphi.oce}
    In the special case of $\varphi(x)=x$, computations simplify and
    \begin{equation*}
        \mathcal{OCE}(l) =\mathrm{OCE}\left(l^{c}\right) \quad\text{and}\quad \mathcal{ES}(l) =\mathrm{ES}\left(l^{c}\right).    
    \end{equation*}
    Similar, if $\varphi(x)=\infty1_{(\delta,\infty]}$ for some $\delta>0$, one gets 
    \begin{equation*}
        \mathcal{OCE}(l)=\inf_{\lambda\geq0} ( \mathrm{OCE}(l^{\lambda c}) + \lambda \delta)  \quad \text{and}\quad \mathcal{ES}(l) =\inf_{\lambda\geq0} \mathrm{ES}(l^{\lambda c}+\lambda\delta).
    \end{equation*}
\end{remark}
\begin{remark}
    \label{rem:c.bounded.or.different.metrik}
    The main reason why the Wasserstein distance is a popular choice of distance to model ambiguity is that the empirical measure converges with respect to this distance, and it is not too strong as opposed to, for instance, the total variation distance, see for instance \cite{dereich2013constructive}.
    For example, the distance between the true measure and the empirical measures converges in expectation, with non-asymptotic rates, roughly in the order $n^{-1/2}$ if enough moments exist.
    This justifies the use of optimal-transport distances to model the ambiguity set in finance or any other field where the true distribution is approximated by the empirical measure built on available data.
    The convergence implies that the approximation becomes more accurate as the data sample size increases.
    Moreover, concentration inequalities suggest how to choose $\delta$ in terms of $N$.
    More precisely, by \cite{fournier2015rate}, if $c(x,y)=|x-y|^p$ and the true distribution $\mu$ is assumed to satisfy $\int \exp( |x|^{2p})\,\mu(dx)<\infty$, then it holds that
    \begin{equation*}
    P(d_c(\mu,\mu_N)\ge \delta)\leq C\exp(-cN\delta^2)\quad\text{for all }\delta\in(0,\infty)
    \end{equation*}
     where $\mu_N:=\frac{1}{N}\sum_{n=1}^N\delta_{x_n}$ is the empirical measure on $N$ 
     i.i.d.~observations $x_i,\dots,x_N$ and $C,c>0$ are constants
    (the strong exponential integrability assumption can be weakened to existence of moments,
    but the formula gets uglier).   Also refer to \cite{Bol-Gui-Vil} for similar bounds.

    There are many other distances on the space of probability measures one could take into account when defining $\mathcal{R}$, and consequently $\mathcal{OCE}$ or $\mathcal{ES}$ (see for instance \cite{gibbs2002choosing} for a survey).
    We already noted before that the Wasserstein distance is stronger than weak convergence, which actually turns out to be necessary in the present setting:
    If one replaces $d_c$ by a distance compatible with weak convergence or if $\liminf_{x\to\infty} c(x)<\infty$ and $\mu_0 = \delta_0$, then always $\mathcal{OCE}(l)=\mathcal{ES}(l)=\infty$ as soon as $l$ is a convex and not constant loss function.
    A proof of these facts is given in Section \ref{sec:proofs}.
\end{remark}

While Theorem \ref{thm:oce.and.es.robust.compute} reduces the infinite dimensional problem to a finite dimensional one, regardless of the computation of the classical OCE, the challenge of computing $l^{\lambda c}$, usually for several different $\lambda$'s, remains.
However, for many relevant examples closed form formulas exist.

\begin{example}[Average value-at-risk]
    \label{ex:avar.formula}
    Let $l(x)=x^+/\alpha$ for some $\alpha\in(0,1)$ so that $\mathcal{OCE}(l)$ becomes the robust average value-at-risk $\mathcal{AV@R}_\alpha$ at level $\alpha$.
    Table \ref{t:AVAR} summarizes the relation between the non-robust average value-at-risk $\mathrm{AV@R}$ and its robust counterpart $\mathcal{AV@R}$ for different choices of $c$ and $\varphi$.
    \begin{table}[H]
        \begin{center}
            \begin{tabular}{@{}lclcl@{}}
                \toprule
                $\mathcal{AV@R}_\alpha$ & & $c(x,y)=|x-y|$ & & $c(x,y)=|x-y|^2$\\
                \midrule
                $\varphi=\infty1_{(\delta,\infty]}$ & & $\mathrm{AV@R}_\alpha + \delta/\alpha$ & &  $\mathrm{AV@R}_\alpha + (\delta/\alpha)^{1/2}$ \\
                $\varphi(x)=x$ & & $\infty$ & & $\mathrm{AV@R}_\alpha+1/(4\alpha)$\\
                $\varphi(x)=x^p/p$ & & $\mathrm{AV@R}_\alpha + (1/\alpha)^{q}/q$ & & $\mathrm{AV@R}_\alpha+ (p+1)/(p (4\alpha)^{-p/(p+1)})$ \\
                \bottomrule 
            \end{tabular}
            \caption{Robust average value-at-risk for selected penalization functions $\varphi$ and Wasserstein distances.
            Here $q>1$ is such that $1/p+1/q=1$.}
            \label{t:AVAR}
        \end{center}
    \end{table}

\end{example}
This example gives a mathematical justification to an intuitively natural fact known as post-valuation adjustment.
When computing the risk of a loss $\mu_0$, it is advisable to add a margin to hedge a possible model misspecification or a computational error, see for instance \cite[Chapter 5]{Damodaran}.

\begin{example}[Monotone mean-variance]
    \label{ex:mean-var}
    Let $l(x) = (((1 + x)^+)^2 -1)/2 $ so that $\mathcal{OCE}(l)$ becomes the robust monotone mean-variance risk measure, see \cite{MMR2006}.
    For the cost function $c(x,y) = (x-y)^2$, one has
    \begin{align*}
        \mathcal{OCE}(l)
        =\inf_{\lambda>1/2}\left( \mathrm{OCE}\left(\frac{2\lambda}{2\lambda-1} l \right) 
        +\frac{1}{4\lambda-2} - \varphi^\ast(\lambda) \right).
    \end{align*}
    For example if $\varphi(x)=x$, this formula simplifies to 
    $\mathcal{OCE}(l) = \mathrm{OCE}(2l) + 1/2$.
\end{example}

\begin{example}[Value-at-risk]
    \label{ex:var}
    Let $c(x,y)=|x-y|^p$ for some $p>0$.
    Then, for the robust value-at-risk at level $\alpha\in(0,1)$, one has
    \begin{align*}
        \mathcal{V@R}_\alpha
        &:=\inf\big\{m\in\mathbb{R} : \mathcal{R}(1_{(m,\infty)})\leq\alpha\big\}\\
        &=\inf_{\lambda\geq0}\inf\big\{ m\in\mathbb{R} : \mu_0((m,\infty))+ e(m,\lambda)+\varphi^\ast(\lambda) \leq\alpha\big\},
    \end{align*}
    where $e(m,\lambda):=\int_{(m-\lambda^{-1/p},m]}1-\lambda|x-m|^p\,\mu_0(dx)$.
    For example if $\varphi(x)=x$ and $c(x,y)=|x-y|$ this formula simplifies and $\mathcal{V@R}_\alpha=\inf\{ m\in\mathbb{R} : \mu_0((m,\infty))+\int_{(m-1,m]}(m-x)\,\mu_0(dx)\leq\alpha\}$.
\end{example}

\subsection{Robust pricing of European options}

The principle behind Theorem \ref{thm:oce.and.es.robust.compute} is not limited to the OCE or ES, or other risk measures of similar form.
It can be applied for example to option pricing, see also the recent paper \cite{Bla-Chen-Zho} for applications to mean variance hedging.

Throughout let $X=\mathbb{R}^d$ be the canonical space of a $d$-dimensional finance asset $S=(S^1,\ldots,S^d)$, that is, $S\colon\mathbb{R}^d\to\mathbb{R}^d$ is the identity.
Here again, we fix a cost function $c$, which for notational simplicity depends only on the difference, and a penalization function $\varphi$.
Further we assume that for some $\varepsilon>0$, it holds $\liminf_{|x|\to\infty} c(x) / |x|^{1+\varepsilon} = \infty$.
Let $\mu_0\in\mathcal{M}_1(\mathbb{R}^d)$ be an integrable risk neutral pricing measure for $S$, that is $s=\int S\,d\mu_0\in\mathbb{R}^d$ is the price of these assets at time $0$
and assume that $\int c(x-y)\,\mu_0(dx)<\infty$ for every $y\in\mathbb{R}^d$.
We further assume that the interest rate is $0$.

Given a European type of option payoff $H:= h(S)$ where $h:\mathbb{R}^d\to \mathbb{R}$ is measurable and bounded from below, we denote by
\begin{equation*}
    \mathrm{PRICE}(H):=\int\nolimits_{\mathbb{R}^d} h(x)\,\mu_0(dx)= \int h(S)\,d\mu_0.
\end{equation*}
its risk neutral price.
Taking uncertainty in the pricing measure into account, an analogue of $\mathcal{R}$ in the previous section, consists of considering all probabilities consistent with the asset prices, that is,
\begin{equation*}
    \mathcal{PRICE}(H):=\sup_{\left\{\mu \in \mathcal{M}_1(\mathbb{R}^d)\colon \int_{\mathbb{R}^d} S \,d\mu= s\right\}}\left( \int h\,d\mu-\varphi\left(d_c(\mu_0,\mu))\right)\right).
\end{equation*}
As the additional constraint $\int S \,d\mu= s$ is satisfied if and only if $\inf_{\alpha\in\mathbb{R}^d}\int \alpha\cdot (S-s) \,d\mu>-\infty$, formally applying a minimax theorem one obtains

\begin{proposition}
    \label{prop:pricing.dual}
    For every measurable payoff $h\colon\mathbb{R}^d\to\mathbb{R}$ such that $\sup_{x\in\mathbb{R}^d} |h(x)|/(1+|x|)<\infty$ one has
    \begin{equation*}
       \mathcal{PRICE}(H)=\inf_{\alpha \in \mathbb{R}^d}\inf_{\lambda \geq 0}\left( \int h^{\lambda c, \alpha}\,d\mu_0+\varphi^\ast(\lambda)\right) 
    \end{equation*}
    where
    \begin{equation*}
        h^{\lambda c, \alpha}(x):=\sup_{y \in \mathbb{R}^d} \left( h(y)+\alpha\cdot(y-s)-\lambda c(y-x) \right) \quad \text{for } x \in \mathbb{R}^d, \alpha \in \mathbb{R}^d, \lambda \geq 0.
    \end{equation*}
\end{proposition}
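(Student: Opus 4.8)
The plan is to recast the constrained supremum defining $\mathcal{PRICE}(H)$ as an unconstrained sup--inf, apply the duality of Theorem~\ref{thm:integral.neigborhood} for each fixed Lagrange multiplier, and exchange the two optimizations. First I would absorb the linear pricing constraint into a Lagrangian: for any $\mu\in\mathcal{M}_1(\mathbb{R}^d)$ with finite first moment one has $\inf_{\alpha\in\mathbb{R}^d}\int\alpha\cdot(S-s)\,d\mu=0$ when $\int S\,d\mu=s$ and $=-\infty$ otherwise, so that $\mathcal{PRICE}(H)=\sup_{\mu}\inf_{\alpha}\big(\int(h+\alpha\cdot(S-s))\,d\mu-\varphi(d_c(\mu_0,\mu))\big)$. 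The key algebraic observation is that for fixed $\alpha$ the integrand is $\int f\,d\mu-\varphi(d_c(\mu_0,\mu))$ with $f=h+\alpha\cdot(S-s)$, and since $S$ is the identity the $\lambda c$-transform of this $f$ is exactly $h^{\lambda c,\alpha}$.

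For the easy inequality ``$\le$'' I would avoid the minimax altogether and argue by \emph{weak duality}. From the definition of the $c$-transform, $f(y)\le f^{\lambda c}(x)+\lambda c(y-x)$ for all $x,y$; integrating this against any coupling of $\mu_0$ and $\mu$ and taking the infimum over couplings gives $\int f\,d\mu\le\int f^{\lambda c}\,d\mu_0+\lambda\,d_c(\mu_0,\mu)$. Combining this with the Fenchel--Young inequality $\lambda\,d_c(\mu_0,\mu)-\varphi(d_c(\mu_0,\mu))\le\varphi^\ast(\lambda)$ and using $\int\alpha\cdot(S-s)\,d\mu=0$ on the constraint set yields $\int h\,d\mu-\varphi(d_c(\mu_0,\mu))\le\int h^{\lambda c,\alpha}\,d\mu_0+\varphi^\ast(\lambda)$; taking the supremum over feasible $\mu$ and the infimum over $\alpha,\lambda$ gives ``$\le$''. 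Note this half needs only the moment assumptions on $\mu_0$ and the linear growth of $h$ to ensure the integrals are well defined, not the full force of Theorem~\ref{thm:integral.neigborhood}.

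For the reverse inequality ``$\ge$'' --- \emph{strong duality} --- I would apply Theorem~\ref{thm:integral.neigborhood} to the inner problem for each fixed $\alpha$, obtaining $\sup_{\mu}\big(\int(h+\alpha\cdot(S-s))\,d\mu-\varphi(d_c(\mu_0,\mu))\big)=\inf_{\lambda\ge0}\big(\int h^{\lambda c,\alpha}\,d\mu_0+\varphi^\ast(\lambda)\big)=:\Phi(\alpha)$, and then justify the exchange $\sup_\mu\inf_\alpha=\inf_\alpha\sup_\mu$. The map $(\mu,\alpha)\mapsto\int f\,d\mu-\varphi(d_c(\mu_0,\mu))$ is concave in $\mu$ (the transport cost $d_c(\mu_0,\cdot)$ is convex and $\varphi$ is convex increasing) and affine in $\alpha$, so a Sion-type minimax theorem applies once the domains are compactified. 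I would cut $\alpha$ down to a compact set using the \emph{coercivity} of the convex function $\Phi$ (perturbing $\mu_0$ slightly off the constraint produces affine minorants of $\Phi$ whose slopes $\int(S-s)\,d\nu$ fill a neighbourhood of the origin while keeping $d_c(\mu_0,\nu)$ and hence the intercepts controlled, forcing $\Phi(\alpha)\to\infty$ as $|\alpha|\to\infty$), and restrict $\mu$ to a sublevel set $\{d_c(\mu_0,\cdot)\le R\}$, which is tight by the growth condition on the cost function.

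The main obstacle is making the two maneuvers of the last paragraph rigorous simultaneously. First, Theorem~\ref{thm:integral.neigborhood} is stated for $f$ bounded from below, whereas $f=h+\alpha\cdot(S-s)$ only grows linearly downward; I would recover it by truncating $f_n:=f\vee(-n)$, applying the theorem to each $f_n$, and passing to the limit by monotone convergence, using $h^{\lambda c,\alpha}(x)\ge h(x)+\alpha\cdot(x-s)$ (which is $\mu_0$-integrable) to control the integrals from below together with the assumption $\int c(x-y)\,d\mu_0<\infty$. Second, because $h$ is merely measurable, the functional $\mu\mapsto\int h\,d\mu$ is not weakly upper semicontinuous, so the compactness argument underlying the minimax must be run against the stronger topology induced by $d_c$ --- under which convergence means weak convergence together with convergence of $p$-th moments --- where the growth of $c$ and the coercivity of $\varphi$ restore the semicontinuity and tightness needed to apply the minimax theorem. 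Handling these two points, rather than the formal Lagrangian computation, is where the real work lies.
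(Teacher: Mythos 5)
Your weak-duality half is correct and is essentially the easy inequality in the paper's argument: the pointwise bound $h(y)+\alpha\cdot(y-s)\le h^{\lambda c,\alpha}(x)+\lambda c(y-x)$ integrated against a near-optimal coupling of $\mu_0$ and $\mu$, combined with Fenchel--Young, gives ``$\le$'' for every feasible $\mu$. The gap is in the strong-duality half. The paper itself calls the Lagrangian/minimax derivation ``formal'' and then does \emph{not} prove the proposition that way: it instead repeats the Choquet-capacitability scheme of Theorem~\ref{thm:integral.neigborhood}, defining $\Phi(f):=\inf_{\alpha,\lambda}\left(\int f^{\lambda c,\alpha}\,d\mu_0+\varphi^\ast(\lambda)\right)$ on the space of measurable functions of linear growth, checking monotonicity, convexity, continuity from above on continuous functions and continuity from below on measurable ones, and computing the conjugate $\Phi^\ast$ on continuous test functions; the martingale constraint falls out of that conjugate computation because $\Phi(\alpha\cdot(S-s))\le 0$ for every $\alpha$, which forces $\Phi^\ast(\mu)=\infty$ unless $\int (S-s)\,d\mu=0$. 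The passage from continuous to merely measurable payoffs is then handled by the capacitability theorem via continuity from below, not by any topological semicontinuity of $\mu\mapsto\int h\,d\mu$.

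This matters because your proposed repair of the minimax step does not work. For a merely Borel-measurable $h$, the map $\mu\mapsto\int h\,d\mu$ is \emph{not} upper semicontinuous for the topology induced by $d_c$ (take $h=1_{\mathbb{Q}}$ and $\delta_{q_n}\to\delta_x$ with $q_n$ rational and $x$ irrational), so a Sion-type theorem cannot be applied with the $\mu$-side compactified in the Wasserstein topology; the growth of $c$ restores tightness but not semicontinuity of the objective. Compactifying only the $\alpha$-side does not close the argument either: replacing $\inf_{\alpha\in\mathbb{R}^d}$ by $\inf_{|\alpha|\le R}$ on the $\sup\inf$ side turns the hard constraint into the penalty $-R\left|\int(S-s)\,d\mu\right|$ and \emph{increases} the value of $\sup_\mu\inf_\alpha J$, so letting $R\to\infty$ requires interchanging $\sup_\mu$ with a decreasing limit --- again a compactness-plus-upper-semicontinuity statement on the $\mu$-side. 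As written, your route closes only for upper semicontinuous $h$; to reach measurable $h$ you would have to append precisely the monotone/capacitability extension that constitutes the paper's proof. By contrast, your treatment of the unboundedness from below (truncation together with $\int c(x-y)\,\mu_0(dx)<\infty$) is the right idea and matches the paper's use of that hypothesis.
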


The latter function $h^{\lambda c, \alpha}$ can be interpreted as a modified payoff priced against the original risk neutral pricing distribution $\mu_0$.
In particular if $\varphi(x)=x$ the formula again simplifies to
\begin{equation*}
    \mathcal{PRICE}(H)=\inf_{\alpha \in \mathbb{R}^d}\int h^{c, \alpha}\,d\mu_0.
\end{equation*}

\begin{example}[Robust Call]
    \label{ex:call.price}
    Let us consider the case of a call option with maturity $1$ and strike $k$ on a single asset, that is, $h(x)=(x-k)^+$ and $d=1$.
    For ease of notations, we denote by $\mathrm{CALL}(k)$ the corresponding price.
    For the cost $c(x,y)=(x-y)^2/2$, the robust call at strike $k$ satisfies
    \begin{equation*}
       \mathcal{CALL}(k)=\inf_{\alpha\in\mathbb{R}}\inf_{\lambda > 0}\left( \mathrm{CALL}\left( k-\frac{2\alpha +1}{2\lambda}\right)+\frac{\alpha^2}{2\lambda}+\varphi^\ast(\lambda)\right). 
    \end{equation*}
    Again, if $\varphi(x)=x$, the robust price of a call simplifies to
    \begin{equation*}
        \mathcal{CALL}(k)=\inf_{\alpha\in\mathbb{R}}\left( \mathrm{CALL}\left( k-\frac{2\alpha +1}{2}\right)+\frac{\alpha^2}{2} \right).
    \end{equation*}
    Figure \ref{f:callprice} represents the standard Black and Scholes price versus robust price as a function of the strike.
    As expected, the largest spread between the Black and Scholes price and the robust one is at the money, while the cost of robustness vanishes for in or out of the money options.
    \begin{figure}[H]
        \centering
        \includegraphics[width=0.9\textwidth]{./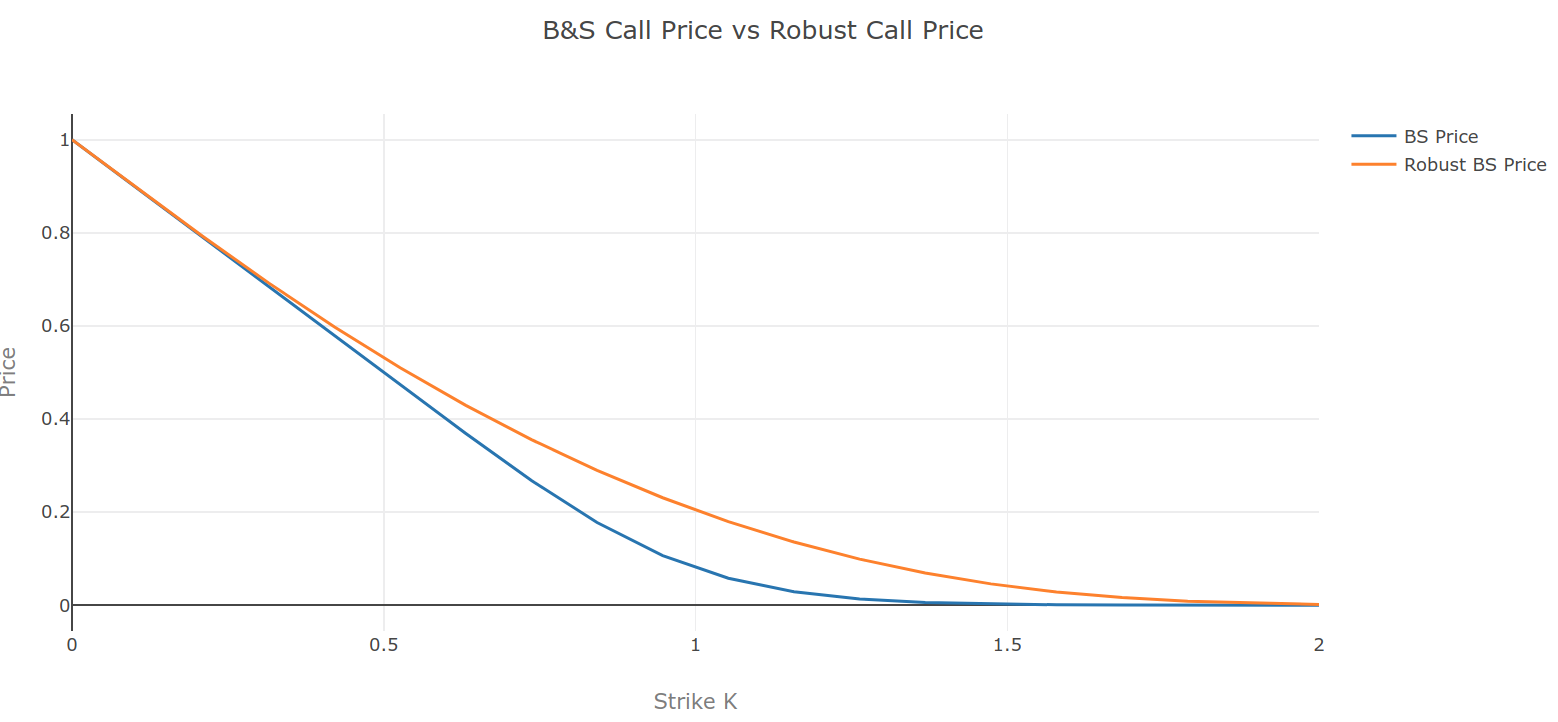} 
        \caption{Call Option price, Black and Scholes vs.~Robust. $\varphi(x)=x$, spot price is one, zero interest rate, the distribution is log-normal with $20\%$ volatility and one year maturity.}
        \label{f:callprice}
    \end{figure}
\end{example}
\begin{remark}
    The robust price $\mathcal{PRICE}(H)$ can also be interpreted as the minimal superhedging price of $H$ when the shortfall risk is controlled by the nonlinear expectation $\mathcal{R}$.
    More precisely, \cite{robhedging} give conditions under which the robust price can be represented as
    \begin{equation*}
        \inf\{m\in \mathbb{R}:\mathcal{R}(-m- \alpha\cdot(S-s)+H)\le 0 \text{ for some } \alpha \in \mathbb{R}^d\}.
    \end{equation*}
\end{remark}

\begin{remark}[Robust Utility maximization]
    Another problem in robust mathematical finance where Theorem \ref{thm:integral.neigborhood}
    directly applies is utility maximization.
    Indeed, given a utility function $U\colon\mathbb{R}\to\mathbb{R}$ bounded from above
    and a measurable claim $f\colon\mathbb{R}^d\to\mathbb{R}$,
    Theorem \ref{thm:integral.neigborhood} implies that
    \[ \sup_{\alpha\in\mathbb{R}^d} \inf_{\mu\in\mathcal{M}_1(\mathbb{R}^d)}
    \left(\int U(f(S) + \alpha\cdot (S-s))\,d\mu +\varphi(d_c(\mu_0,\mu)) \right)
    =\sup_{\alpha\in\mathbb{R}^d}\sup_{\lambda\geq 0}  
    \left( \int U^{\lambda c, \alpha}\,d\mu_0  - \varphi^\ast(\lambda)\right) ,\]
    where $U^{\lambda c,\alpha}(x):=\inf_y (U(f(y) + \alpha\cdot (y-s)) +\lambda c(x,y))$.
    Note that one can also treat the multi-period case with dynamic programming,
    see \cite[Section 2.3]{Bartl_Utilitymax} for a discussion of the Wasserstein distance
    in this framework.
\end{remark}

\section{Results for general sets}
\label{seq:kusuoka.and.duality}

\subsection{Tail risk measures and Kusuoka type representation}
\label{sec:directed}

For this section, let  
\begin{equation*}
   \mathcal{D}\subseteq\mathcal{M}_1(\mathbb{R}) \quad \text{and} \quad
   \mathcal{R}(f):=\sup_{\mu\in\mathcal{D}} \int f\,d\mu
\end{equation*}
 for $f\colon\mathbb{R}\to(-\infty,\infty]$ measurable and bounded from below.
In the non-robust setting -- that is, $\mathcal{D}$ being a singleton --  it is well known that the average value-at-risk, see Example \ref{ex:avar.formula}, is a risk measure capturing the ``tail risk'' by satisfying the representation
\begin{align}
    \label{eq:avar.is.integral.var}
    \mathrm{AV@R}_\alpha=\frac{1}{\alpha}\int\nolimits_{0}^\alpha \mathrm{V@R}_u\,du,
\end{align}
where $\mathrm{V@R}$ is the value-at-risk, see Example \ref{ex:var}.
That is, $\mathrm{AV@R}$ is roughly speaking the average over the $\mathrm{V@R}$ below the $\alpha$-quantile;
an important property for instance in optimal portfolio problems, see \citet{Roc-Ury}.
However, we will see in Section \ref{sec:dual} that $\mathcal{AV@R}_\alpha$ equals the supremum over $\mu\in\mathcal{D}$ of the average value at risk with respect to $\mu$, from which it easily follows that \eqref{eq:avar.is.integral.var} in general no longer holds true when $\mathcal{D}$ consists of more than one element. 
In a similar manner, one cannot expect any form of Kusuoka type representation -- abstract version of \eqref{eq:avar.is.integral.var} where any law invariant risk measure can be represented over the average value at risk -- in general.

In order to prove a robust version of formula \eqref{eq:avar.is.integral.var}, stronger assumptions on the set ${\cal D}$ are needed.
If $\mathcal{D}\neq\emptyset$ satisfies
\begin{align}\label{eq:ass.dir}\tag{DIR}
    \begin{array}{l}
        \text{$\mathcal{D}$ is tight and for all $\mu,\tilde{\mu}\in\mathcal{D}$ there is $\nu\in\mathcal{D}$}\\
        \text{such that $\mu(-\infty,t],\tilde{\mu}(-\infty,t]\ge \nu(-\infty,t]$ for all $t$}
\end{array}\bigg\},
\end{align}
it also follows that the robust $\mathrm{OCE}$ has the same properties as the non-robust one, see Corollary \ref{cor:when_D_directed}.
Here tight means that for every $\varepsilon>0$ there is some compact set $K\subset\mathbb{R}$ satisfying $\sup_{\mu\in\mathcal{D}}\mu(K^c)\leq\varepsilon$.
\begin{proposition}
    \label{prop:avar-int}
    Assume that \eqref{eq:ass.dir} holds true. Then
    \begin{equation*}
        \mathcal{AV@R}_\alpha = \frac{1}{\alpha}\int\nolimits_{0}^\alpha \mathcal{V@R}_u\,du
    \end{equation*}
    for every $\alpha\in(0,1)$.
\end{proposition}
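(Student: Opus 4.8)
The plan is to reduce everything to the classical Rockafellar--Uryasev identity \eqref{eq:avar.is.integral.var} applied measure by measure, and then to interchange a supremum with an integral, the latter being the only place where \eqref{eq:ass.dir} is genuinely needed.

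First I would record the two pointwise ``robust $=$ worst-case'' identities. Writing $\mathrm{V@R}_u(\mu):=\inf\{m:\mu((m,\infty))\le u\}$ and $\mathrm{AV@R}_\alpha(\mu)$ for the non-robust quantities attached to a single $\mu\in\mathcal{D}$, I would invoke the representation established in Section~\ref{sec:dual} to write $\mathcal{AV@R}_\alpha=\sup_{\mu\in\mathcal{D}}\mathrm{AV@R}_\alpha(\mu)$, and I would prove the analogous statement for the value-at-risk, namely $\mathcal{V@R}_u=\sup_{\mu\in\mathcal{D}}\mathrm{V@R}_u(\mu)$ for every $u\in(0,1)$. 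Since $\mathcal{R}(1_{(m,\infty)})=\sup_{\mu\in\mathcal{D}}\mu((m,\infty))=:\bar F(m)$, the level sets $\{m:\bar F(m)\le u\}$ and $\{m:\mu((m,\infty))\le u\}$ are up-sets, so the inequality ``$\ge$'' is immediate; for ``$\le$'', note that $m<\mathcal{V@R}_u$ forces $\bar F(m)>u$, hence there is $\mu\in\mathcal{D}$ with $\mu((m,\infty))>u$ and therefore $\mathrm{V@R}_u(\mu)\ge m$, and letting $m\uparrow\mathcal{V@R}_u$ gives the claim. Here tightness of $\mathcal{D}$ enters only to guarantee $\bar F(m)\to0$ as $m\to\infty$, so that $\mathcal{V@R}_u$ is finite for $u>0$ and the integrals below are well posed.

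Combining these with \eqref{eq:avar.is.integral.var} for each single $\mu$, the proposition reduces to the identity $\sup_{\mu\in\mathcal{D}}\int_0^\alpha\mathrm{V@R}_u(\mu)\,du=\int_0^\alpha\sup_{\mu\in\mathcal{D}}\mathrm{V@R}_u(\mu)\,du$. The inequality ``$\le$'' is automatic. The reverse inequality is the heart of the matter, and this is where I would use the directedness in \eqref{eq:ass.dir}: since $\mu(-\infty,t],\tilde{\mu}(-\infty,t]\ge\nu(-\infty,t]$ means that $\nu$ dominates $\mu$ and $\tilde{\mu}$ in first-order stochastic order, monotonicity of generalized inverses yields $\mathrm{V@R}_u(\nu)\ge\mathrm{V@R}_u(\mu)$ and $\mathrm{V@R}_u(\nu)\ge\mathrm{V@R}_u(\tilde{\mu})$ for all $u$ simultaneously. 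Thus the family $\{u\mapsto\mathrm{V@R}_u(\mu):\mu\in\mathcal{D}\}$ of nonincreasing functions is upward directed.

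To exploit this I would select, for each rational level $u_k\in(0,\alpha)$ and each $n$, a measure $\mu_{k,n}\in\mathcal{D}$ with $\mathrm{V@R}_{u_k}(\mu_{k,n})$ approaching $\sup_{\mu}\mathrm{V@R}_{u_k}(\mu)$, enumerate this countable collection, and build inductively via \eqref{eq:ass.dir} an increasing (in stochastic order) sequence $\nu_j\in\mathcal{D}$ dominating the first $j$ of them. Then $u\mapsto\mathrm{V@R}_u(\nu_j)$ increases in $j$ to some nonincreasing limit $h\le\sup_{\mu}\mathrm{V@R}_u(\mu)=:g$, with $h=g$ at every rational level; as $h$ and $g$ are both monotone and agree on a dense set, they coincide Lebesgue-almost everywhere. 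Monotone convergence, the functions being bounded below by the finite constant $\mathrm{V@R}_\alpha(\nu_1)$, then gives $\int_0^\alpha g\,du=\lim_j\int_0^\alpha\mathrm{V@R}_u(\nu_j)\,du\le\sup_{\mu}\int_0^\alpha\mathrm{V@R}_u(\mu)\,du$, which is exactly the missing inequality. The main obstacle is precisely this last step: extracting a countable cofinal sequence from a possibly uncountable directed family and turning almost-everywhere agreement of the monotone functions into the interchange of supremum and integral; the monotone structure in $u$ together with the directedness in $\mu$ are what make it go through.
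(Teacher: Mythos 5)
Your proof is correct, but it follows a genuinely different route from the paper's. The paper's proof goes through Lemma \ref{lem:sup_integr}: using tightness and the directedness in \eqref{eq:ass.dir}, it shows that $\ubar{F}(t):=\inf_{\mu\in\mathcal{D}}\mu(-\infty,t]$ is the distribution function of a measure $\mu^*$ satisfying $\sup_{\mu\in\mathcal{D}}\int f\,d\mu=\int f\,d\mu^*$ for every increasing continuous $f$ bounded from below (the key step being a $\tau$-additivity argument for the decreasing net $(F_\mu)_{\mu}$ integrated against the measure $df$, followed by integration by parts). Consequently $\mathcal{AV@R}_\alpha=\mathrm{AV@R}_\alpha(\mu^*)$ and $\mathcal{V@R}_u=\mathrm{V@R}_u(\mu^*)$, and a single application of the classical identity \eqref{eq:avar.is.integral.var} to $\mu^*$ finishes the argument. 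You instead work entirely on the level space: you reduce the statement to the interchange $\sup_{\mu}\int_0^\alpha\mathrm{V@R}_u(\mu)\,du=\int_0^\alpha\sup_\mu\mathrm{V@R}_u(\mu)\,du$ and prove it by extracting a countable cofinal sequence from the upward directed family of quantile functions; the identification $\mathcal{V@R}_u=\sup_\mu\mathrm{V@R}_u(\mu)$, the almost-everywhere agreement of the two monotone functions, and the monotone-convergence step are all sound. What your route buys is that it isolates exactly where \eqref{eq:ass.dir} enters (the sup/integral interchange) and avoids the measure-theoretic machinery of Lemma \ref{lem:sup_integr}; what it costs is the extra input $\mathcal{AV@R}_\alpha=\sup_{\mu\in\mathcal{D}}\mathrm{AV@R}_\alpha(\mu)$. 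You invoke Section \ref{sec:dual} for this, as does the paper's own discussion, but note that Theorem \ref{thm:dual} is stated for bounded claims and convex penalties, whereas here the underlying claim is the unbounded identity on $\mathbb{R}$ and $\mathcal{D}$ need not be convex; to make this step airtight you should add one line observing that tightness of $\mathcal{D}$ confines the infimum over $m$ to a common compact interval, after which Fan's minimax theorem (which only needs concavelikeness, not convexity of $\mathcal{D}$) applies. The paper's construction of $\mu^*$ sidesteps the minimax entirely, since the supremum over $\mu$ is resolved for each fixed $m$ before the infimum over $m$ is taken.
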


\begin{example}
    \label{exa:dir}
    Let $(\Omega, {\cal F}, P)$ be a probability space carrying a Brownian motion $(W_t)_{t\in [0,T]}$,
    where $T\in (0,\infty)$,
    equipped with the completion of the natural filtration of $W$.
    Let $\sigma$, $\ubar{b}$, and $\bar{b}$ be three real numbers such that $\sigma>0$.
    Then, for every strictly increasing function $f\colon\mathbb{R}\to\mathbb{R}$ and $S_0>0$, the set
    \begin{equation*}
        \mathcal{D}:=\big\{P\circ f(S_T)^{-1} \text{ with } dS_t = S_t(b\,dt + \sigma\,dW_t)\text{ and } 
        b\in[\ubar{b},\bar{b}] \big\}
    \end{equation*}
    satisfies \eqref{eq:ass.dir}.
\end{example}

Proposition \ref{prop:avar-int} suggests a Kusuoka type representation for robustifications of law-invariant risk measures.
In fact, let $\rho$ be the risk measure defined as
\begin{equation*}
    \rho(\mu):= \sup_{\nu \in {\cal M}_1((0,1])}
    \left( \int\nolimits_{(0,1]}AV@R_u(\mu)\,\nu(du) - \beta(\nu) \right)
\end{equation*}
for some penalty function $\beta\colon \mathcal{M}_1((0,1])\to(-\infty,\infty]$,
and consider its robust counterpart given by
$ \rho({\cal D}) := \sup_{\mu \in {\cal D}}\rho(\mu)$.
The following corollary gives a representation of $\rho({\cal D})$ in terms of ${\cal AV@R}$.
In particular, it shows that when ${\cal D}$ satisfies \eqref{eq:ass.dir}, ${\cal AV@R}$ constitute basic building blocks of robustifications of law-invariant risk measures.
\begin{corollary}
\label{cor:kusuoka}
    Assume that ${\cal D}$ satisfies \eqref{eq:ass.dir} and is closed in the weak topology.
    Then, it holds that
    \begin{equation*}
    \rho({\cal D}) := \sup_{\mu \in {\cal D}}\rho(\mu)
    = \sup_{\nu \in {\cal M}_1((0,1])}
    \left(\int\nolimits _{(0,1]}{\cal AV@R}_u({\cal D})\,\nu(du) - \beta(\nu) \right).
    \end{equation*}
\end{corollary}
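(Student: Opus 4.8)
The plan is to show that the extra hypothesis of weak closedness (absent in Proposition~\ref{prop:avar-int}) serves precisely to turn the approximation underlying \eqref{eq:ass.dir} into an \emph{attainment} statement: under \eqref{eq:ass.dir} together with weak closedness, the set $\mathcal D$ possesses a greatest element $\mu^\ast$ for the first-order stochastic order, i.e.\ $\mu^\ast(-\infty,t]\le \mu(-\infty,t]$ for every $t$ and every $\mu\in\mathcal D$. Once such a $\mu^\ast$ is produced, the robust Kusuoka formula collapses to the classical representation of $\rho$ evaluated at $\mu^\ast$, and the corollary follows with essentially no further work.

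First I would construct $\mu^\ast$. Writing $F_\mu(t):=\mu(-\infty,t]$, condition \eqref{eq:ass.dir} says exactly that the family $\{F_\mu:\mu\in\mathcal D\}$ is downward directed for the pointwise order: given $\mu,\tilde\mu$ there is $\nu\in\mathcal D$ with $F_\nu\le F_\mu\wedge F_{\tilde\mu}$, and by induction the same holds for any finite subfamily. Set $\underline F(t):=\inf_{\mu\in\mathcal D}F_\mu(t)$ and use directedness to choose a sequence $(\mu_n)$ with $F_{\mu_n}$ decreasing and $F_{\mu_n}(q)\downarrow\underline F(q)$ for every rational $q$. Tightness of $\mathcal D$ guarantees via Prokhorov that $(\mu_n)$ is relatively weakly compact and that no mass escapes to $\pm\infty$, so that the right-continuous version of $\lim_n F_{\mu_n}$ is a genuine distribution function; the sequence then converges weakly to a measure $\mu^\ast$ with $F_{\mu^\ast}=\underline F$, and weak closedness yields $\mu^\ast\in\mathcal D$. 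By construction $F_{\mu^\ast}=\underline F\le F_\mu$ for all $\mu\in\mathcal D$, so $\mu^\ast$ is the desired stochastically largest element.

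Next I would exploit monotonicity. Since a pointwise smaller distribution function has pointwise larger quantiles, $F_{\mu^\ast}\le F_\mu$ gives $\mathrm{V@R}_u(\mu^\ast)\ge\mathrm{V@R}_u(\mu)$ and hence, via \eqref{eq:avar.is.integral.var}, $\mathrm{AV@R}_u(\mu^\ast)\ge\mathrm{AV@R}_u(\mu)$ for every level $u$ and every $\mu\in\mathcal D$. Combined with the identity $\mathcal{AV@R}_u(\mathcal D)=\sup_{\mu\in\mathcal D}\mathrm{AV@R}_u(\mu)$ from Section~\ref{sec:dual} and with $\mu^\ast\in\mathcal D$, this yields $\mathcal{AV@R}_u(\mathcal D)=\mathrm{AV@R}_u(\mu^\ast)$ \emph{simultaneously for all} $u\in(0,1]$. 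The same monotonicity propagates through $\rho$: for each fixed $\nu$ the map $\mu\mapsto\int_{(0,1]}\mathrm{AV@R}_u(\mu)\,\nu(du)-\beta(\nu)$ is nondecreasing for $\preceq_{\mathrm{st}}$, hence so is its supremum $\rho$, and therefore $\rho(\mathcal D)=\sup_{\mu\in\mathcal D}\rho(\mu)=\rho(\mu^\ast)$.

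Putting the pieces together, I would substitute the two identities into the definition of $\rho(\mu^\ast)$:
\begin{equation*}
\rho(\mathcal D)=\rho(\mu^\ast)=\sup_{\nu\in\mathcal M_1((0,1])}\left(\int_{(0,1]}\mathrm{AV@R}_u(\mu^\ast)\,\nu(du)-\beta(\nu)\right)=\sup_{\nu\in\mathcal M_1((0,1])}\left(\int_{(0,1]}\mathcal{AV@R}_u(\mathcal D)\,\nu(du)-\beta(\nu)\right),
\end{equation*}
which is the claimed formula. The main obstacle is the attainment step: guaranteeing that the pointwise infimum $\underline F$ of the directed family of distribution functions is realized by a measure \emph{inside} $\mathcal D$. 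This is exactly where both tightness and weak closedness are indispensable---tightness to prevent mass escaping to $\pm\infty$ (so that $\underline F$ is a bona fide distribution function and weak limit points exist), and weak closedness to place the limit $\mu^\ast$ back in $\mathcal D$. Without attainment one is forced into the weaker route of interchanging $\sup_{\mu}$ with $\int\cdot\,\nu(du)$ through a directed monotone-convergence argument, which delivers only the inequality already implicit in Proposition~\ref{prop:avar-int}; the greatest-element construction is precisely what upgrades this to the exact identity.
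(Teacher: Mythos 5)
Your proposal is correct and follows essentially the same route as the paper: both hinge on producing a single measure $\mu^\ast\in\mathcal{D}$ whose distribution function is the pointwise infimum $\inf_{\mu\in\mathcal{D}}F_\mu$ (this is exactly the content of Lemma \ref{lem:sup_integr}, which you in effect reprove via a Prokhorov/diagonal argument and quantile monotonicity), so that $\mathcal{AV@R}_u = \mathrm{AV@R}_u(\mu^\ast)$ simultaneously for all $u$ and the robust Kusuoka formula collapses to the classical representation evaluated at $\mu^\ast$. The only cosmetic difference is that the paper states the conclusion as two inequalities and obtains $\mu^\ast$ through the integration-by-parts and net-convergence argument of Lemma \ref{lem:sup_integr} rather than your first-order stochastic dominance reasoning.
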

\subsection{Duality}
\label{sec:dual}

Let $(\Omega,\mathcal{F})$ be a given measurable space endowed with a non-linear expectation 
\begin{equation*}
    \mathcal{E}(\cdot):=\sup_{P\in\mathcal{M}_1(\Omega)} \left(E_P[\cdot]-\beta(P)\right),
\end{equation*}
for some function $\beta \colon {\cal M}_1(\Omega)\to [0,\infty]$, where ${\cal M}_1(\Omega)$ is the set of probability measures on $\mathcal{F}$.
In analogy to the first part of the paper, we define the robust OCE as
\begin{equation*}
    \mathcal{OCE}(X)=\inf_{m\in\mathbb{R}}\left(\mathcal{E}(l(X-m))+m\right)   
\end{equation*}
for every measurable function $X\colon\Omega\to\mathbb{R}$.
Here $l\colon\mathbb{R}\to\mathbb{R}$ is assumed to satisfy the usual assumptions
\begin{align}
    \label{eq:ass.l}\tag{CIB}
    \begin{array}{l}
        l \text{ is convex, increasing, bounded from below, and } \\
        l(0)=0, \,l^\ast(1)=0,\text{ and } l(x)>x \text{ for $|x|$ large enough}
\end{array}\bigg\}
\end{align}
and $l^*(y)=\sup_{x \in \mathbb{R}}(xy-l(x))$ denotes the convex conjugate of $l$ for $y\in \mathbb{R}$ and $l^*(\infty):=\infty$.
Note that $l^\ast(y)\geq 0$ and that if $l$ is continuously differentiable, 
then $l^\ast(1)=0$ just says that $l'(0)=1$.
For the remainder of this section $dQ/dP$ denotes the Radon-Nikodym derivative if
$Q$ is absolutely continuous with respect to $P$ and $dQ/dP\equiv\infty$ otherwise.

\begin{theorem}
\label{thm:dual}
    Assume that $l$ satisfies \eqref{eq:ass.l} and that $\beta$ is convex with $\inf_{P \in \mathcal{M}_1(\Omega)}\beta(P)=0$.
    Then one has 
    \begin{equation*}
        \mathcal{OCE}(X)=\sup_{Q\in\mathcal{M}_1(\Omega)} \left( E_Q[X] - \inf_{P\in\mathcal{M}_1(\Omega)} \left(E_P\left[l^\ast\left(\frac{dQ}{dP}\right) \right]+\beta(P)\right)\right)
    \end{equation*}
    for every bounded measurable function $X\colon\Omega\to \mathbb{R}$.
\end{theorem}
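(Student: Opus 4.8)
The plan is to rewrite both optimizations as a single minimax problem and then to exchange the order of infimum and supremum. Introducing $G(m,P):=E_P[l(X-m)]-\beta(P)+m$ we have $\mathcal{OCE}(X)=\inf_{m\in\mathbb{R}}\sup_{P\in\mathcal{M}_1(\Omega)}G(m,P)$. The easy ``$\ge$'' half is pure Young's inequality: for $Q\ll P$ one has $l^\ast(dQ/dP)\ge (X-m)\,dQ/dP-l(X-m)$, so integrating against $P$ and using $E_P[(X-m)\,dQ/dP]=E_Q[X]-m$ gives $E_P[l(X-m)]+m\ge E_Q[X]-E_P[l^\ast(dQ/dP)]$. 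Subtracting $\beta(P)$, taking $\sup_P$ on the left, and noting $E_P[l^\ast(dQ/dP)]=\infty$ for $Q\not\ll P$ (since $l^\ast(\infty)=\infty$) yields $\mathcal{E}(l(X-m))+m\ge E_Q[X]-\alpha(Q)$, where $\alpha(Q):=\inf_{P}(E_P[l^\ast(dQ/dP)]+\beta(P))$; hence $\mathcal{OCE}(X)\ge\sup_Q(E_Q[X]-\alpha(Q))$. Everything then reduces to the reverse inequality.

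First I would shrink the domain of $m$ to a compact interval. Put $C:=\|X\|_\infty$. The normalization $l^\ast(1)=0$ says precisely that $1\in\partial l(0)$, so by monotonicity of the subdifferential every subgradient of $l$ is $\le 1$ on $(-\infty,0]$ and $\ge 1$ on $[0,\infty)$. Since $m\mapsto G(m,P)$ is convex, a one-line convexity estimate then gives $G(m,P)\ge G(C,P)$ for $m\ge C$ and $G(m,P)\ge G(-C,P)$ for $m\le -C$, uniformly in $P$. Taking the supremum over $P$, the function $h(m):=\sup_P G(m,P)=\mathcal{E}(l(X-m))+m$ is non-increasing on $(-\infty,-C]$ and non-decreasing on $[C,\infty)$, so that $\mathcal{OCE}(X)=\inf_{m\in[-C,C]}\sup_{P}G(m,P)$ and, for each $P$, also $\inf_{m\in\mathbb{R}}G(m,P)=\inf_{m\in[-C,C]}G(m,P)$.

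The heart of the argument, and its main difficulty, is exchanging the infimum and the supremum. On $[-C,C]\times\mathcal{M}_1(\Omega)$ the map $G$ is convex and continuous in $m$ (the family $l(X-m)$ is uniformly bounded, so continuity follows from dominated convergence) and, being affine in $P$ minus the convex term $\beta(P)$, concave in $P$. The obstacle is that $\Omega$ carries no topology, so $\mathcal{M}_1(\Omega)$ cannot be assumed compact and the usual symmetric (Sion-type) minimax theorems do not apply. The remedy is precisely that the $m$-domain has been made compact: a lopsided minimax theorem in the spirit of Ky Fan requires compactness and lower semicontinuity only in the variable over which one minimizes, and asks nothing of the other domain beyond convexity and concavity of $P\mapsto G(m,P)$. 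Applying it gives $\inf_{m\in[-C,C]}\sup_P G=\sup_P\inf_{m\in[-C,C]}G=\sup_P(\mathrm{OCE}_P(X)-\beta(P))$, where $\mathrm{OCE}_P(X):=\inf_{m\in\mathbb{R}}(E_P[l(X-m)]+m)$ is the non-robust optimized certainty equivalent under $P$ and where the inner infimum was extended to all of $\mathbb{R}$ by the previous step.

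It remains to insert the classical OCE duality of Ben-Tal and Teboulle \cite{Ben-Teb}, namely $\mathrm{OCE}_P(X)=\sup_{Q\ll P}(E_Q[X]-E_P[l^\ast(dQ/dP)])$; this is obtained by identifying $\mathrm{OCE}_P$ as a convex risk measure whose minimal penalty is $\sup_Z(E_Q[Z]-E_P[l(Z)])=E_P[l^\ast(dQ/dP)]$, the last equality being Rockafellar's interchange of supremum and integral, i.e.\ a measurable-selection argument. Substituting and interchanging the two suprema over $P$ and over $Q\ll P$, then using once more that $E_P[l^\ast(dQ/dP)]=\infty$ when $Q\not\ll P$, the inner infimum over $P$ runs over all of $\mathcal{M}_1(\Omega)$ and collapses to $\alpha(Q)$. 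This gives $\mathcal{OCE}(X)\le\sup_Q(E_Q[X]-\alpha(Q))$, which together with the weak duality above proves the theorem. The two genuinely technical ingredients are therefore the topology-free (lopsided) minimax, made possible by compactifying only the scalar variable $m$ via $l^\ast(1)=0$, and the measurable-selection interchange hidden in the non-robust penalty computation.
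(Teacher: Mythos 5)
Your proof is correct and follows essentially the same route as the paper: reduce $m$ to a compact interval, apply a Ky Fan--type minimax theorem (the paper cites \cite{fan53}) to swap $\inf_m$ and $\sup_P$, and then insert the classical Ben-Tal--Teboulle representation of $\mathrm{OCE}_P$. Your justification of the compact $m$-interval via $l^\ast(1)=0$ and subgradient monotonicity is in fact a more careful version of the paper's coercivity claim, and the explicit weak-duality half is harmless extra work since the minimax argument already yields equality.
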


When $\beta$ is the convex indicator of a non-empty convex subset ${\cal P}$ of ${\cal M}_1(\Omega)$, that is $\beta=\infty 1_{\mathcal{P}^c}$, then $\mathcal{E}(\cdot)=\sup_{P\in\mathcal{P}} E_P[\cdot]$ and 
\begin{equation*}
    \mathcal{OCE}(X)=\sup_{Q\in\mathcal{M}_1(\Omega)} \left( E_Q[X] -\inf_{P\in\mathcal{P}} E_P\Big[l^\ast\left(\frac{dQ}{dP}\right) \Big]\right).
\end{equation*}
In particular, this implies that the robust optimized certainty equivalent is equal to the supremum over $P\in\mathcal{P}$ of the optimized certainty equivalent with respect to $P$.

\begin{example}
    % {\rule{0mm}{1mm}\\[-4.25ex]\rule{0mm}{1mm}}
    Troughout this example let $\mathcal{E}(\cdot)=\sup_{P\in\mathcal{P}} E_P[\cdot]$
    for some convex set $\mathcal{P}$.
    \begin{itemize}
        \item 
            \textbf{Relative entropy:}
            Let $l(x)=(\exp(\alpha x)-1)/\alpha$ for some $\alpha>0$. Then
            \[\mathcal{OCE}(X)=\sup_{P\in\mathcal{P}} \frac{1}{\alpha}\log E_P[\exp(\alpha X)]   
            =\sup_{Q\in\mathcal{M}_1(\Omega)} \left( E_Q[X] - \inf_{P\in\mathcal{P}} \frac{1}{\alpha}E_Q\Big[ \log \frac{dQ}{dP} \Big]\right).\]
            This is a generalization of the well-known Gibbs variational principle, and 
            $\inf_{P\in\mathcal{P}} E_Q[\log dQ/dP]$
            can be seen as the Kullback-Leibler divergence between the probability measure $Q$ and the set $\mathcal{P}$.
        \item
            \textbf{Monotone mean-variance:}
            Let $l(x) = (((1 + x)^+)^2 - 1)/2$.
            Then
            \[ \mathcal{OCE}(X)=
                \sup_{Q\in\mathcal{M}_1(\Omega)} \left(E_Q[X] -\inf_{P\in\mathcal{P}}E_P\Big[\frac{1}{2}\left(\frac{dQ}{dP}\right)^2-1)\Big]
            \right).\]
            The function $\inf_{P\in\mathcal{P}}E_P[(dQ/dP)^2/2-1)]$ can be seen as the 
            R\`enyi divergence of order $2$ between the probability measure $Q$ and the set $\mathcal{P}$.
        \item
            \textbf{Average value-at-risk:}
            Let $l(x)=x^+/\alpha$ for some $\alpha\in(0,1)$.
            Then
            \[ \mathcal{OCE}(X)
            =\sup\big\{ E_Q[X] : Q\in\mathcal{M}_1(\Omega)\text{ such that } dQ/dP\leq 1/\alpha \text{ for some }P\in\mathcal{P}\big\}.\]
    \end{itemize}
\end{example}

\section{Proofs}
\label{sec:proofs}

\subsection{Proof for Section \ref{sec:main.results}}
\label{sec:proofs.section.wasserstein}

\begin{proof}[of Theorem \ref{thm:integral.neigborhood}]
    For every measurable function $f\colon X\to(-\infty,\infty]$ which is bounded from below, define  
    \[ \Phi(f):=\inf_{\lambda\geq 0} \left( \int f^{\lambda c}\,d\mu_0 + \varphi^\ast(\lambda) \right).\]
    The goal is to apply Choquet's theorem (in the form of Theorem \ref{thm:choquet}) 
    to the functional $\Phi$, which requires to check the following four steps.

    \emph{Step 1: Monotonicity and convexity.}
    If $f\leq g$, then  $f^{\lambda c}\leq g^{\lambda c}$ for any $\lambda$ so that
    $\Phi(f)\leq \Phi(g)$.
    Moreover, for $t\in[0,1]$ and $\lambda',\lambda''\geq0$ it holds that
    \[(f+g)^{\lambda c}\leq tf^{\lambda'c}+(1-t)g^{\lambda''c}
    \quad\text{for }\lambda:=t\lambda'+(1-t)\lambda''\]
    which implies (also using convexity of $\varphi^\ast$) that $\Phi(tf+(1-t)g)\leq t\Phi(f)+(1-t)\Phi(g)$. 
    Finally, for every $m\in\mathbb{R}$, $\lambda\geq0$, and $x\in X$ it holds that
    \[ m^{\lambda c}(x)
        =m-\inf_{y\in X} \lambda c(x,y)
    =m\]
    so that $\Phi(m)=\inf_{\lambda\geq0} (\varphi^\ast(\lambda)-m)=m$. 
    As $\Phi$ is monotone, it follows in particular that $\Phi(f)\in\mathbb{R}$
    whenever $f$ is bounded.

    \emph{Step 2: Continuity from above.}
    Denote by $C_b$ and $U_b$ the set of bounded continuous and upper semicontinuous functions
    from $X$ to $\mathbb{R}$, respectively.
    We show that $\Phi$ is continuous from above on $C_b$.
    Let $\varepsilon>0$ and let $(f_n)$ be a sequence in $C_b$ which decreases pointwise to 0.
    Fix some $m$ such that $f_1\leq m$ and $\lambda>0$ such that $\varphi^\ast(\lambda)<\varepsilon$.
    This is possible because $\varphi$ is not constant by assumption, hence 
    $\varphi^\ast$ is real-valued (and therefore continuous by convexity)
    on some neighborhood of $0$.
    Further fix $k$ such that $\mu_0([-k,k]^c)\leq\varepsilon$.
    By assumption there is $r>0$ such that $\lambda c(x,y)\geq m$ whenever $|x-y|\geq r$, hence
    \[ f_n^{\lambda c}(x)
        =\sup_{y\in[x-r,x+r]} (f_n(y)-\lambda c(x,y))
    \leq \sup_{y\in[x-r,x+r]} f_n(y).  \]
    It follows from Dini's lemma that
    $f_n 1_{[-k-r,k+r]}\leq\varepsilon$ for $n$ large, thus
    \[ f_n^{\lambda c} \leq \varepsilon 1_{[-k,k]} + m 1_{[-k,k]^c} \]
    for $n$ large.
    Therefore
    \begin{align*}
        \Phi(f_n)&\leq \varphi^\ast(\lambda) + \int f_n^{\lambda c}(x)\,\mu_0(dx)
        \leq \varepsilon  + \varepsilon \mu_0([-k,k]) + m\mu_0([-k,k]^c)
        \leq (2+m)\varepsilon
    \end{align*}
    for $n$ large and as $\varepsilon>0$ was arbitrary, $\Phi(f_n)\downarrow 0=\Phi(0)$.

    \emph{Step 3: Continuity from below.}
    We show that $\Phi(f_n)\uparrow \Phi(f)$ whenever $(f_n)$ is a sequence
    of measurable functions $f_n\colon X\to(-\infty,\infty]$ bounded from below
    which increases to $f$.
    Because $\Phi$ is increasing, it suffices to show that $\Phi(f) \le \sup_{n}\Phi(f_n)$.
    Assume that $\sup_n\Phi(f_n)<\infty$, because otherwise there is nothing to prove.
    For every $n$ fix $\lambda_n\geq0$ such that
    \[ \varphi^\ast(\lambda_n) +\int f^{\lambda_n c}_n\,d\mu_0 \leq \Phi(f_n) + \frac{1}{n} \]
    and $m\in\mathbb{R}$ with $m\leq f_1\leq f_n$ so that
    \[ f_n^{\lambda_n c}(x)
    \geq \sup_{y\in X} \left( m-\lambda_n c(x,y)\right)=m.\]
    Note that as $\varphi^\ast$ convex and not constant by assumption,
    $\varphi^\ast(r_n)\to \infty$ for every sequence $(r_n)$ which converges
    to $\infty$. Therefore $(\lambda_n)$ is bounded and, 
    possibly after passing to a subsequence, $(\lambda_n)$ converges to some 
    $\lambda\in[0,\infty)$. 
    Note that  
    \[ f^{\lambda c}(x) = \sup_{y\in X} \lim_n \left( f_n(y)-\lambda_n c(x,y) \right)
        \leq \liminf_n \sup_{y\in X} \left( f_n(y)-\lambda_n c(x,y) \right)
    =\liminf_n f^{\lambda_n c}_n(x)  \]
    for every $x$ and by the same argument 
    $\varphi^\ast(\lambda)\leq\liminf_n \varphi^\ast(\lambda_n)$.
    An application of  Fatou's lemma now implies 
    \[ \Phi(f)
        \leq \varphi^\ast(\lambda) +\int f^{\lambda c}\,d\mu_0
        \leq\liminf_n \left( \varphi^\ast(\lambda_n) +\int f^{\lambda_nc}_n\,d\mu_0\right)
        =\sup_n \Phi(f_n)
    \leq\Phi(f), \]
    where the last inequality holds because $\Phi$ is increasing and $f_n\leq f$ for every $n$.  
    Thus the claim follows.

    \emph{Step 4: Computation of the convex conjugate.}
    We claim that 
    \[ \Phi_{C_b}^\ast(\mu)
        :=\sup_{f\in C_b}\left(  \int f\,d\mu - \Phi(f)\right)
        =\Phi_{U_b}^\ast(\mu):=\sup_{f\in U_b}\left(  \int f\,d\mu - \Phi(f)\right)
    =\varphi( d_c(\mu_0,\mu) )\]
    with the convention that $d_c(\mu_0,\mu)=\infty$ if $\mu$ is not a probability.
    First notice that $0\leq \Phi_{C_b}^\ast\leq\Phi^\ast_{U_b}$ because $\Phi(0)=0$ and $C_b$ is a subset of $U_b$.
    To show that $\Phi^\ast_{U_b}(\mu)\leq \varphi(d_c(\mu_0,\mu))$
    one may assume that $d_c(\mu_0,\mu)<\infty$, otherwise this is trivially satisfied 
    (as by assumption $\varphi(\infty)=\infty$).
    Then there is a probability $\pi$ on $X\times X$ with marginals 
    $\pi(\cdot\times X)=\mu_0$ and $\pi(X\times\cdot)=\mu$
    such that $\int c\,d\pi= d_c(\mu_0,\mu)$, see for instance \cite[Theorem 5.9]{villani2008optimal}.
    For any $f\in U_b$ and $\lambda\ge0$, the pointwise inequality
    \[f(y)\leq \lambda c(x,y)+ f^{\lambda c}(x)\quad\text{for all }x,y\] 
    integrated with respect to $\pi$ yields
    \begin{align}
        \label{eq:weak.dual.Phi}
        \int\nolimits_{X} f(y)\,\mu(dy)
        = \int\nolimits_{X\times X} f(y)\,\pi(dx,dy)
        \leq \lambda d_c(\mu_0,\mu) +\int\nolimits_{X} f^{\lambda c}(x)\,\mu_0(dx).
    \end{align}
    By definition it holds that
    $\lambda d_c(\mu_0,\mu)- \varphi^\ast(\lambda)\leq \varphi(d_c(\mu_0,\mu))$ for all $\lambda\geq0$
    so that $\int f\,d\mu-\Phi(f)\leq \varphi(d_c(\mu_0,\mu))$; 
    hence $\Phi^\ast_{C_b}(\mu)\leq \Phi^\ast_{U_b}(\mu)\leq \varphi(d_c(\mu_0,\mu))$.

    To show the reverse inequality, fix some $\mu$ and $\varepsilon>0$.
    By the Fenchel-Moreau theorem, $\varphi(x)=\sup_{\lambda\geq 0}(\lambda x-\varphi^\ast(x))$
    for every $x\geq 0$, hence there is $\lambda\geq 0$ such that 
    \begin{align*}
        \lambda d_c(\mu_0,\mu)-\varphi^\ast(\lambda)
        &\geq \varphi(d_c(\mu_0,\mu))-\varepsilon, 
        & &\quad\text{if } \varphi(d_c(\mu_0,\mu))<\infty,\\
        \lambda d_c(\mu_0,\mu)-\varphi^\ast(\lambda)
        &\geq 1/\varepsilon,
        & &\quad\text{if } d_c(\mu_0,\mu)<\infty \text{ but } \varphi(d_c(\mu_0,\mu))=\infty,\\
        \varphi^\ast(\lambda)
        &\leq\varepsilon \text{ and }\lambda>0
        & &\quad\text{if } d_c(\mu_0,\mu)=\infty.
    \end{align*}
    As $d_{\lambda c}(\mu,\mu_0)=\lambda d_c(\mu_0,\mu)$,
    by the dual formula \eqref{eq:dual.for.transport} for $d_c$, 
    there are bounded and continuous functions $f,g$ such that
    \[f(x)+g(y)\leq \lambda c(x,y) \quad \text{for all } x,y
        \text{ and}\quad
        \int f\,d\mu +\int g\,d\mu_0
        \geq 
        \begin{cases}
            \lambda d_c(\mu_0,\mu)-\varepsilon,&\text{if } d_c(\mu_0,\mu)<\infty,\\
            1/\varepsilon,&\text{otherwise}.
    \end{cases}\]
    Note that $f(x)-\lambda c(x,y)\leq -g(y)$ from which it follows that
    $f^{\lambda c}\leq -g$. As $\varphi(d_c(\mu_0,\mu))<\infty$ implies
    $d_c(\mu_0,\mu))<\infty$ we therefore get
    \[ \int f\,d\mu-\Phi(f)
        \geq \int f\,d\mu +\int g\,d\mu_0 -\varphi^\ast(\lambda)
        \geq
        \begin{cases}
            \varphi(d_c(\mu_0,\mu))-2\varepsilon,&\text{if } \varphi(d_c(\mu_0,\mu))<\infty,\\
            1/\varepsilon-\varepsilon,  &\text{if } \varphi(d_c(\mu_0,\mu))=\infty.
    \end{cases}\]
    As $\varepsilon>0$ was arbitrary, it follows that $\Phi^\ast_{C_b}(\mu)\geq \varphi(d_c(\mu_0,\mu))$
    which by the previous part shows 
    $\Phi^\ast_{C_b}(\mu)=\Phi^\ast_{U_b}(\mu)=\varphi(d_c(\mu_0,\mu))$.
    The representation \eqref{eq:rep.ball} now follows from an application of Theorem \ref{thm:choquet}.

    As for the existence of an optimal $\lambda\geq0$, apply \emph{Step 3} to the constant sequence $f_n=f$.
\end{proof}

\begin{proof}[of Theorem \ref{thm:oce.and.es.robust.compute}]   
    First note that as $c$ depends only on the difference, one has
    $l(\cdot-m)^{\lambda c}(x)=l^{\lambda c}(x-m)$
    for all $m\in \mathbb{R}$, $x\in \mathbb{R}$, and $\lambda\ge 0$.
    Now, by Theorem \ref{thm:integral.neigborhood} one has
    \begin{align*}
    \mathcal{OCE}(l)
    &=\inf_{m\in\mathbb{R}} \Big( \sup_{\mu\in\mathcal{M}_1(\mathbb{R})} \Big(
    \int l(x-m)\,\mu(dx) -\varphi\big( d_c(\mu_0,\mu))\big)\Big)+m \Big)\\
    &=\inf_{m\in\mathbb{R}}\Big(\inf_{\lambda\geq 0}\Big(\int l(\cdot-m)^{\lambda c}(x) \,\mu_0(dx)+ \varphi^\ast(\lambda)\Big) +m\Big)\\
    &=\inf_{\lambda\geq 0} \inf_{m\in\mathbb{R}} \Big(\int l^{\lambda c}(x-m) \,\mu_0(dx) +\varphi^\ast(\lambda)+m\Big)
    =\inf_{\lambda\geq 0} \Big(\mathrm{OCE}(l^\lambda)+\varphi^\ast(\lambda)\Big)
    ,
    \end{align*}
    which completes the proof.
    The same arguments show that
    \begin{align*}
    \mathcal{ES}(l)
    &=\inf\Big\{ m\in\mathbb{R} : \min_{\lambda\geq0} \Big( 
    \int l^{\lambda c}(x-m)\,\mu_0(dx) + \varphi^\ast(\lambda)\Big) \leq 0\Big\}\\
    &=\inf_{\lambda\geq0} \inf\Big\{ m\in\mathbb{R} : 
    \int l^{\lambda c}(x-m)\,\mu_0(dx)+ \varphi^\ast(\lambda)\leq 0\Big\}\\
    &=\inf_{\lambda\geq0} \mathrm{ES}(l^{\lambda c}+\varphi^\ast(\lambda)).
    \end{align*}
\end{proof}

\begin{proof}[of Remark \ref{rem:c.bounded.or.different.metrik}]
    We first prove that $\mathcal{OCE}(l)=\mathcal{ES}(l)=\infty$ if 
    $\liminf_{x\to\infty} c(x)<\infty$ and $l$ is a loss function.
    Because $l$ is increasing, convex and not constant, there exist
    $a,b>0$ such that $l(x)\geq a x-b$ for every $x\in\mathbb{R}$.
    Because $\varphi$ is continuous at 0, there is $\delta>0$ such that
    $\varphi(\delta)<\infty$.
    Moreover, by assumption, there is some $r\in\mathbb{R}$ and a sequence 
    $(x_k)$ in $\mathbb{R}$ such that $x_k\geq k$ and $c(x_k)\leq r$.
    For simplicity let us assume that $c(0)=0$, $\mu_0=\delta_0$, and define
    $\mu_k=(1-\delta/r)\delta_0+\delta/r\delta_{x_k}$. Then
    \[d_c(\mu_k,\mu_0)= \left(1-\frac{\delta}{r}\right) c(0-0)+\frac{\delta}{r} \tilde{c}(x_k-0)
    \leq \delta\]
    so that $\varphi(d_c(\mu_k,\mu_0))\leq\varphi(\delta)<\infty$.
    However, as 
    \[\sup_k \int l(x-m)\,\mu_k(dx)\\
        \geq \sup_k \left( \left(1-\frac{\delta}{r}\right)l(0-m) + \frac{\delta}{r} \left(a (x_k-m)-b\right)\right)
    =\infty\]
    for every $m \in \mathbb{R}$, it follows that $\mathcal{OCE}(l)=\mathcal{ES}(l)=\infty$.

    To show that $\mathcal{OCE}(l)=\mathcal{ES}(l)=\infty$ if $d_c$ is replaced by a distance
    compatible with weak convergence, let $\mu_k=(k-1)/k\delta_0+1/k\delta_{k^2}$
    which converges weakly to $\delta_0$.
    As $\varphi$ is continuous at 0, $\varphi(d(\mu_k,\delta_0))\to 0$.
    However, $\sup_k  \int l(x-m)\,\mu_k(dx)=\infty$ for every $m$, 
    which again  implies $\mathcal{OCE}(l)=\mathcal{ES}(l)=\infty$.
\end{proof}

\begin{proof}[of Example \ref{ex:avar.formula}]
    For every $\lambda\geq 0$ it holds that
    \[ \sup_{y\in\mathbb{R}} \left( \frac{1}{\alpha}y^+-\lambda(x-y)^2\right)
    = \frac{1}{\alpha}\left( x+\frac{1}{4\lambda\alpha}\right)^+\]
    so that
    \[\mathrm{OCE}(l^{\lambda c})
    =\mathrm{OCE}(l) + \frac{1}{4\lambda\alpha} \]
    for every $\lambda\geq 0$.
    Thus, Theorem \ref{thm:oce.and.es.robust.compute} yields 
    \[ \mathcal{OCE}(l)
    =\mathrm{OCE}(l) + \inf_{\lambda\geq 0} \left( \varphi^\ast(\lambda)  + \frac{1}{4\lambda\alpha}\right).\]
    It remains to plug the different $\varphi$'s in and compute the infimum.
    This proves the claim for $p=2$.

    Similarly, for $p=1$, it holds that $l^{\lambda c}(x) = \infty$ if $\lambda < 1/\alpha$ and $l^{\lambda c}(x) = x^+/\alpha$ else.
    Thus, it follows by Theorem \ref{thm:oce.and.es.robust.compute} that
    \[\mathcal{OCE}(l) 
        = \mathrm{OCE}(l) + \inf_{\lambda \geq 1/\alpha}\varphi^\ast(\lambda)
    =\mathrm{OCE}(l) + \varphi^\ast\left(\frac{1}{\alpha}\right)\]
    where the last equality holds because $\varphi^\ast$ is increasing.
\end{proof}

\begin{proof}[of Example \ref{ex:mean-var}]
    It holds
    \begin{equation*}
        l^{\lambda c}(x)=
        \begin{cases}
            \infty,& \text{if } \lambda <1/2,\\
            \frac{2\lambda}{2\lambda - 1}l(x)+ \frac{1}{4\lambda - 2}, & \text{else}.
        \end{cases}
    \end{equation*}
    Thus, for the optimized certainty equivalent, one has 
    $\mathrm{OCE}(l^{\lambda c})=\mathrm{OCE}(\frac{2\lambda}{2\lambda - 1}l)+\frac{1}{4\lambda - 2}$ 
    so that by Theorem \ref{thm:oce.and.es.robust.compute} it holds that
    \begin{equation*}
        \mathcal{OCE}(l)
        =\inf_{\lambda>1/2}\left( \mathrm{OCE}\left(\frac{2\lambda}{2\lambda-1} l \right) 
        +\frac{1}{4\lambda-2} - \varphi^\ast(\lambda) \right).
    \end{equation*}
\end{proof}

\begin{proof}[of Example \ref{ex:var}]
    Note that the value at risk is a special case of the expected shortfall, corresponding
    to the loss function $l=1_{(0,\infty)}-\alpha$.
    Further, with the convention $0^{-1/p}=\infty$, it holds that
    \[l^{\lambda c}(x)
    =l(x) + (1-\lambda|x|^p)1_{(-\lambda^{-1/p},0]}(x)-\alpha\]
    for every $x$.
    Therefore $\int l^{\lambda c}(x-m)\,\mu_0(dx)=\mu_0((m,\infty))+e(m,\lambda)-\alpha$
    so that
    \begin{align*}
        \mathcal{V@R}_\alpha
        &=\mathcal{ES}(l)
        =\inf_{\lambda\geq0} \mathrm{ES}\left(l^{\lambda c}+\varphi^\ast(\lambda)\right)\\
        &=\inf_{\lambda\geq0}\inf\big\{ m\in\mathbb{R} : \mu_0((m,\infty))+ e(m,\lambda)+\varphi^\ast(\lambda)\leq\alpha\big\},
    \end{align*}
    by Theorem \ref{thm:oce.and.es.robust.compute}.
    The special case $\varphi(x)=x$ follows from Remark \ref{rem:special.case.varphi.oce}.
\end{proof}

\begin{proof}[of Proposition \ref{prop:pricing.dual}]
    The proof is similar to the one of Theorem \ref{thm:integral.neigborhood}
    and we only give a sketch. 
    Denote by $B_{\mathrm{lin}}$ the set of measurable functions $f\colon \mathbb{R}^d\to\mathbb{R}$
    for which $\sup_x f(x)/(1+|x|)$ is finite, and by
    $U_{\mathrm{lin}}$ and $C_{\mathrm{lin}}$ the subsets of upper semicontinuous and continuous functions,
    respectively.
    For every $f\in B_{\mathrm{lin}}$ define
    \[ \Phi(f):=\inf_{\alpha\in\mathbb{R}^d,\lambda\geq 0}
    \left( \int f^{\lambda c, \alpha}\,d\mu_0 -\varphi\left( d_c(\mu_0,\mu))\right) \right) \]
    which is well-defined as 
    $f^{\lambda c,\alpha}(x)
    \geq f(x)-\alpha\cdot (x-s) -c(0)
    \geq -k(|x|+1)$ for some $k>0$ and $\int |x|\,\mu_0(dx)<\infty$.

    As in Theorem \ref{thm:integral.neigborhood} one checks that $\Phi$ is a 
    convex function on $B_{\mathrm{lin}}$ which is continuous from above on $C_{\mathrm{lin}}$
    (here the growth condition $\liminf_{|x|\to\infty} c(x)/(1+|x|^{1+\varepsilon})=\infty$
    is used).
    Moreover, similar arguments as in Theorem \ref{thm:integral.neigborhood}
    show that $\Phi$ is continuous from below on $B_{\mathrm{lin}}$
    (here the condition that $\int c(x-y)\,\mu_0(dx)<\infty$ for every $y$ is used).
    By a version of Theorem \ref{thm:choquet} (see \cite[Theorem 2.2]{bartl2017robust}), 
    it follows that
    \[ \Phi(f)=\sup_{\mu\in\mathcal{M}_1(\mathbb{R}^d)}\left( \int f\,d\mu -
    \Phi^\ast_{C_{\mathrm{lin}}}(\mu) \right)\quad\text{for } f\in B_{\mathrm{lin}}, \]
    provided that
    \[\Phi^\ast_{C_{\mathrm{lin}} }(\mu)
        :=\sup_{f\in C_{\mathrm{lin}} } \left(\int f\,d\mu -\Phi(f) \right)
        =\sup_{f\in U_{\mathrm{lin}} } \left(\int f\,d\mu -\Phi(f) \right)
    =:\Phi^\ast_{U_{\mathrm{lin}} }(\mu).  \]
    As $\Phi(\alpha\cdot (S-s))\leq 0$ and $\alpha\cdot (S-s)\in C_{\mathrm{lin}}$
    for every $\alpha\in\mathbb{R}^d$, similar computations as in 
    Theorem \ref{thm:integral.neigborhood} yield
    \[\Phi^\ast_{C_{\mathrm{lin}} }(\mu)
        =\Phi^\ast_{U_{\mathrm{lin}} }(\mu)
        =\begin{cases}
            \varphi(d_c(\mu_0,\mu))), &\text{if }\int S-s\,d\mu=0\\
            \infty, &\text{otherwise}.
    \end{cases}\]
    This ends the proof.
\end{proof}

\begin{proof}[of Example \ref{ex:call.price}]
    We compute 
    \[h^{\lambda c,\alpha}(x)=\sup_{y\in\mathbb{R}}\left( (y-k)^+ +\alpha (y-s)-\frac{\lambda}{2}( y-x )^2\right).\]
    The first order conditions yield
    \[1_{(0,\infty)}(y-k)+\alpha-\lambda(y-x)
        \geq 0
    \geq 1_{[0,\infty)}(y-k)+\alpha-\lambda (y-x)\]
    For an optimizer there are three cases:
    \begin{itemize}
        \item 
            $y^\ast <k$, then $y^\ast-x=\alpha/\lambda$, hence
            $y^\ast=\alpha/\lambda +x$ with $x \in (-\infty, k-\alpha/\lambda)$.
        \item $y^\ast>k$, then $y^\ast-x=(\alpha+1)/\lambda$, hence $y^\ast=(\alpha+1)/\lambda+x$ with $x\in (k-(\alpha+1)/\lambda, \infty)$.
        \item $y^\ast=k$ is impossible.
    \end{itemize}
    Hence we have three cases:
    \begin{itemize}
        \item 
            For $x\in(-\infty, k-(\alpha+1)/\lambda]\subseteq (-\infty, k-\alpha/\lambda)$, 
            it follows that
            \[h^{\lambda c,\alpha}(x)
                =\alpha\left(\frac{\alpha}{\lambda}+x-s\right)-\frac{\lambda}{2}\frac{\alpha^2}{\lambda^2}
            =\alpha(x-s)+\frac{\alpha^2}{2 \lambda}.\]
        \item 
            For $x\in[k-\alpha/\lambda, \infty)\subseteq (k-(\alpha+1)/\lambda, \infty)$, it follows that
            \begin{align*}
                h^{\lambda c,\alpha}(x) 
                & = \left( \frac{\alpha+1}{\lambda}+x-k \right)+\alpha\left( \frac{\alpha+1}{\lambda} +x-s\right)-\frac{(\alpha+1)^2}{2\lambda}\\
                & = \left( x-\left( k-\frac{2\alpha+1}{2\lambda} \right) \right)+\alpha(x-s)+\frac{\alpha^2}{2\lambda}.
            \end{align*}
        \item 
            For $x\in(k-(\alpha+1)/\lambda,k-\alpha/\lambda)$, it follows that
            \[h^{\lambda c,\alpha}(x)=\left( \alpha(x-s)+\frac{\alpha^2}{2 \lambda} \right)\vee \left(\left( x-\left( k-\frac{2\alpha+1}{2\lambda} \right) \right)+\alpha(x-s)+\frac{\alpha^2}{2\lambda}  \right).\]
    \end{itemize}
    Hence
    \[ h^{\lambda c,\alpha}(x)=\left( x-\left( k-\frac{2 \alpha+1}{2 \lambda} \right) \right)^+
    +\alpha (x-s)+\frac{\alpha^2}{2\lambda}.\]
    Therefore, Proposition \ref{prop:pricing.dual} and the fact that $\mu_0$ is a pricing measure yield
    \[ \mathcal{CALL}(k)=\inf_{\alpha\in\mathbb{R}}\inf_{\lambda> 0}\left(
            \mathrm{CALL}\left( k-\frac{2\alpha +1}{2\lambda}\right)
    +\frac{\alpha^2}{2\lambda}+\varphi^\ast(\lambda) \right).\]
    Plugging the special cases of $\varphi$ into this equation yields the claim.
\end{proof}

\subsection{Proofs for Section \ref{sec:directed}}

The main argument for the proof of Proposition \ref{prop:avar-int} is given in the next lemma.
\begin{lemma}
    \label{lem:sup_integr}
    Assume that $\mathcal{D}$ satisfies \eqref{eq:ass.dir}.
    Then, there exists $\mu^*\in\mathcal{M}_1$ such that
    \begin{align}
        \label{eq:int_ubarF}
        \sup_{\mu \in {\cal D}}\int f d\mu=\int fd\mu^*
    \end{align}
    for every increasing, continuous function $f\colon\mathbb{R}\to \mathbb{R}$ that is bounded from below.
    If in addition $\mathcal{D}$ is closed in the weak topology induced by all continuous bounded functions, then 
    $\mu^\ast\in\mathcal{D}$.
\end{lemma}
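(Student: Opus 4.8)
The plan is to identify $\mu^*$ through its cumulative distribution function. Write $F_\mu(t):=\mu(-\infty,t]$ and set $\ubar{F}(t):=\inf_{\mu\in\mathcal{D}}F_\mu(t)$. Condition \eqref{eq:ass.dir} says precisely that the family $\{F_\mu:\mu\in\mathcal{D}\}$ is downward directed (any two have a common pointwise lower bound in the family), while tightness forces $\ubar{F}(t)\to 0$ as $t\to-\infty$ and $\ubar{F}(t)\to 1$ as $t\to\infty$. Since $\ubar{F}$ is increasing but need not be right-continuous, I would define $F^*(t):=\inf_{s>t}\ubar{F}(s)$, its right-continuous modification, and let $\mu^*$ be the probability measure with $F_{\mu^*}=F^*$; tightness is exactly what guarantees the correct limits $0$ and $1$, so that $\mu^*$ is indeed a probability measure.

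For the inequality $\sup_{\mu\in\mathcal{D}}\int f\,d\mu\le\int f\,d\mu^*$, I would note that right-continuity of each $F_\mu$ together with $\ubar{F}\le F_\mu$ yields $F^*(t)=\inf_{s>t}\ubar{F}(s)\le\inf_{s>t}F_\mu(s)=F_\mu(t)$ for every $\mu\in\mathcal{D}$. Thus $\mu^*$ stochastically dominates every $\mu\in\mathcal{D}$, and since $f$ is increasing this gives $\int f\,d\mu\le\int f\,d\mu^*$ for all $\mu\in\mathcal{D}$, hence the bound.

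The reverse inequality is the crux. Fixing an enumeration $(t_k)$ of the rationals, I would use \eqref{eq:ass.dir} iteratively to build a sequence $\mu_n\in\mathcal{D}$ whose distribution functions decrease, $F_{\mu_{n+1}}\le F_{\mu_n}$, and satisfy $F_{\mu_n}(t_k)\le\ubar{F}(t_k)+1/n$ for all $k\le n$. The monotone limit $G:=\lim_n F_{\mu_n}$ then exists pointwise and coincides with $\ubar{F}$ on the rationals. A short computation --- using that the right-continuous modification of an increasing function only sees right limits, together with $F^*(t^-)=\ubar{F}(t^-)$ --- identifies the right-continuous modification of $G$ with $F^*$, so that $F_{\mu_n}(t)\to F^*(t)$ at every continuity point of $F^*$; that is, $\mu_n\to\mu^*$ weakly. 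Since $f$ is continuous and bounded from below, the portmanteau inequality for weak convergence gives $\int f\,d\mu^*\le\liminf_n\int f\,d\mu_n\le\sup_{\mu\in\mathcal{D}}\int f\,d\mu$, which together with the previous paragraph proves \eqref{eq:int_ubarF}. I expect the main obstacle to be this identification of the weak limit: one must check carefully that the monotone limit $G$ of the directed, rational-pinned sequence has $F^*$ as its right-continuous version, so that no probability mass is lost in passing to the limit --- and tightness is precisely what rules out escaping mass.

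Finally, if $\mathcal{D}$ is weakly closed, then since the constructed $\mu_n$ lie in $\mathcal{D}$ and converge weakly to $\mu^*$, I conclude $\mu^*\in\mathcal{D}$.
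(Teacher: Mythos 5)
Your proof is correct, but it takes a genuinely different route from the paper's. You build an explicit sequence $(\mu_n)\subseteq\mathcal{D}$ (pinning the decreasing, directed family of distribution functions down to $\ubar{F}$ on the rationals), identify its weak limit as $\mu^*$, and then combine first-order stochastic dominance ($F_{\mu^*}\le F_\mu$ for all $\mu\in\mathcal{D}$, giving ``$\le$'' for increasing $f$) with the portmanteau lower bound for lower semicontinuous $f$ bounded from below (giving ``$\ge$''). The paper instead works directly with the \emph{net} $(F_\mu)_{\mu\in\mathcal{D}}$: it views $df$ as a finite Borel measure, invokes $\tau$-additivity to exchange the infimum over the directed family with integration against $df$, and concludes by integration by parts; unbounded $f$ is then handled by monotone approximation. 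Your argument is more elementary (no $\tau$-additivity or regularity of $df$ is needed, only Helly--Bray-type convergence of distribution functions at continuity points), and it pays off in the closedness statement: since you exhibit an actual sequence in $\mathcal{D}$ converging weakly to $\mu^*$, weak closedness of $\mathcal{D}$ immediately yields $\mu^*\in\mathcal{D}$, whereas the paper resorts to Prokhorov plus a separation argument that implicitly needs more structure on $\mathcal{D}$ (e.g.\ convexity, and the claim that the separating functional can be taken increasing). Two small remarks: the right-continuous modification $F^*$ you introduce actually coincides with $\ubar{F}$, since an infimum of the upper semicontinuous functions $F_\mu$ is upper semicontinuous and an increasing upper semicontinuous function is automatically right-continuous; and in the stochastic-dominance step you should note (e.g.\ via quantile coupling) that the comparison $\int f\,d\mu\le\int f\,d\mu^*$ remains valid in $(-\infty,\infty]$ when $f$ is unbounded above.
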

\begin{proof}
    First assume that $f$ is bounded.
    As $\mathcal{D}$ is tight, it can be checked that $\ubar{F}$ defined by
    $\ubar{F}(t):=\inf_{\mu \in {\cal D}}F_\mu(t)$ where $F_\mu(t):=\mu(-\infty, t]$ is a cumulative distribution function.
    Furthermore, $f$ being increasing, continuous and bounded,
    it defines a finite Borel measure $df$ on the real line.
    Hence $df$ is regular and $\tau$-additive, see for instance \cite[Proposition 7.2.2]{bogachev2}.
    Let us first show that 
    \begin{align}
        \label{eq:infD-int}
        \int\ubar{F}df = \inf_{\mu \in \mathcal{D}}\int F_\mu df.
    \end{align}
    Each cumulative distribution function $F_\mu$ is increasing and right-continuous, hence upper semicontinuous.
    Because ${\cal D}$ satisfies \eqref{eq:ass.dir}, the net\footnote{$\mathcal{D}$ is endowed with the ordering $\mu\preceq\nu$ if and only if $\mu(-\infty,t]\geq \nu(-\infty,t]$ for every $t$.}
    $(F_\mu)_{\mu \in {\cal D}}$ is decreasing.
    Thus, $(1-F_\mu)_{\mu}$ is an increasing net of nonnegative lower semicontinuous functions such that 
    $1-\ubar{F} = \lim_{\mu} (1 - F_\mu)$.
    It therefore follows from  \cite[Lemma 7.2.6]{bogachev2} that
    \begin{align*}
        \sup_{\mu \in {\cal D}}\int 1 - F_\mu \,df = \lim_{\mu} \int 1 - F_\mu\, df = \int 1 - \ubar{F}\,df,
    \end{align*}
    which shows \eqref{eq:infD-int}.
    Moreover, as $f$ is continuous, one has $\int \ubar{F}(x-)\,df(x)= \int \ubar{F}(x)\,df(x)$.
    Hence, integration by parts yield
    \begin{align*}
        \int fd\ubar{F}
        &=f(\infty)-\int \ubar{F}df
        =f(\infty)-\int\inf_{\mu \in \mathcal{D}} F_\mu \,df
        =f(\infty)-\inf_{\mu \in \mathcal{D}} \int F_\mu df\\
        &=f(\infty)-\inf_{\mu \in \mathcal{D}} \left( f(\infty) -\int fdF_{\mu} \right)
        =\sup_{\mu\in \mathcal{D}}\int fdF_{\mu},
    \end{align*}
    showing \eqref{eq:int_ubarF} whenever $f$ is bounded, with $\mu^*$ being the distribution associated to $\ubar{F}$.
    If $f$ is not bounded, we approximate $f$ from below by  $f^n:=f\wedge n$.

    If $\mathcal{D}$ is also closed, then it follows from Prokhorov's theorem
    and tightness that $\mathcal{D}$ is compact.
    Suppose for contradiction that $\mu^\ast\notin\mathcal{D}$.
    Then, by the strong separation theorem and \eqref{eq:int_ubarF} which was already proven, 
    there exists a continuous bounded and increasing function $f\colon\mathbb{R}\to\mathbb{R}$ such that 
    $\int f d\mu^\ast > \sup_{\mu \in \mathcal{D}} \int f d\mu$, which clearly contradicts \eqref{eq:int_ubarF}.
    Thus, $\mu^\ast\in\mathcal{D}$.
\end{proof}

\begin{corollary}
    \label{cor:when_D_directed}
    Assume that \eqref{eq:ass.dir} holds and that $l\colon\mathbb{R}\to\mathbb{R}$ is convex, increasing, bounded from below, 
    and that $l(x) > x$ for $|x|$ large enough.
    Then there exists $\mu^*\in\mathcal{M}_1$ and $m^*\in \mathbb{R}$ such that 
    \[\mathcal{OCE}(l)
        =\inf_{m\in\mathbb{R}} \left( \int l(x-m)\,\mu^\ast(dx)+m\right)
    =\int l(x-m^\ast)\,\mu^\ast(dx) +m^\ast. \]
    In particular $m^\ast$ is characterized by 
    \[ \int l'_-(x-m^\ast)\,\mu^\ast(dx) 
        \leq 1
    \leq \int l'_+(x-m^\ast)\,\mu^\ast(dx)\]
    for the right and left hand derivatives $l'_-$ and $l'_+$ of $l$.
    If $l$ is continuously differentiable, then inequalities in the above formula are equalities.
\end{corollary}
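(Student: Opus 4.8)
The plan is to use Lemma~\ref{lem:sup_integr} to collapse the robust problem onto a single measure and thereby reduce everything to the classical (non-robust) optimized certainty equivalent, whose minimizer is then produced by a convexity and first-order argument.

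First I would observe that for every $m\in\mathbb{R}$ the shifted loss $x\mapsto l(x-m)$ is increasing (as a composition of the increasing map $x\mapsto x-m$ with the increasing $l$), continuous (a finite convex function on $\mathbb{R}$ is continuous), and bounded from below. Crucially, Lemma~\ref{lem:sup_integr} furnishes \emph{one} measure $\mu^\ast\in\mathcal{M}_1$ realising the supremum $\sup_{\mu\in\mathcal{D}}\int g\,d\mu=\int g\,d\mu^\ast$ simultaneously for \emph{all} such $g$; applying it to the whole family $\{\, l(\cdot-m) : m\in\mathbb{R}\,\}$ gives $\mathcal{R}(l(\cdot-m))=\int l(x-m)\,\mu^\ast(dx)$ for every $m$. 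Taking the infimum over $m$ then yields
\[\mathcal{OCE}(l)=\inf_{m\in\mathbb{R}}\Big(\int l(x-m)\,\mu^\ast(dx)+m\Big),\]
i.e.\ the robust OCE equals the ordinary OCE computed against the fixed distribution $\mu^\ast$. This is the entire content of the first displayed equality, and is where the already-proven lemma does the work.

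Next I would set $g(m):=\int l(x-m)\,\mu^\ast(dx)+m$ and show the infimum is attained. The function $g$ is convex, being the sum of a linear term and an integral of the convex maps $m\mapsto l(x-m)$. For attainment I would analyse its one-sided derivatives: differentiating under the integral (justified by monotone convergence, using that $l'_-$ and $l'_+$ are monotone) gives $g'_+(m)=1-\int l'_-(x-m)\,\mu^\ast(dx)$ and $g'_-(m)=1-\int l'_+(x-m)\,\mu^\ast(dx)$, so that $0\in\partial g(m^\ast)$ is \emph{exactly} the asserted characterisation $\int l'_-(x-m^\ast)\,\mu^\ast(dx)\le 1\le\int l'_+(x-m^\ast)\,\mu^\ast(dx)$. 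To produce such an $m^\ast$ I would track the nonincreasing map $\psi(m):=\int l'_+(x-m)\,\mu^\ast(dx)$. Boundedness from below forces the left asymptotic slope $\lim_{y\to-\infty}l'_+(y)=0$, whence $\psi(m)\to0$ as $m\to+\infty$; the growth condition $l(x)>x$ for large $x$, together with convexity, forces the right asymptotic slope $L:=\lim_{y\to\infty}l'_+(y)\ge 1$, whence $\psi(m)\to L$ as $m\to-\infty$. Since $\psi$ is monotone and runs from $0$ up to $L\ge1$, it meets the level $1$, and the crossing point is the desired $m^\ast$.

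Finally, convexity of $g$ turns the inclusion $0\in\partial g(m^\ast)$ into global optimality, so $\mathcal{OCE}(l)=g(m^\ast)=\int l(x-m^\ast)\,\mu^\ast(dx)+m^\ast$, proving the second equality; when $l$ is continuously differentiable one has $l'_-=l'_+=l'$ and the two inequalities collapse to $\int l'(x-m^\ast)\,\mu^\ast(dx)=1$. The main obstacle is precisely this attainment step: one must rigorously justify the differentiation under the integral and the two recession-slope computations, and it is exactly here that the hypothesis $l(x)>x$ for large $x$ is indispensable, since without $L\ge1$ the infimum could escape to $m=-\infty$ and no minimiser would exist. By contrast, the reduction in the first paragraph is immediate once the lemma is in hand, and the convexity bookkeeping is routine.
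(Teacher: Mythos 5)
Your proposal is correct and follows the paper's route exactly: the single substantive step is invoking Lemma~\ref{lem:sup_integr} to replace the supremum over $\mathcal{D}$ by integration against one fixed measure $\mu^\ast$ simultaneously for all the increasing, continuous, bounded-below functions $l(\cdot-m)$, after which the problem is the classical non-robust OCE. Where the paper simply cites Ben-Tal and Teboulle for the existence and first-order characterization of $m^\ast$, you carry out that standard convexity argument explicitly, and it is sound (up to the borderline case where the recession slope of $l$ equals exactly $1$, so that $\psi$ may approach but never attain the level $1$ --- an edge case already implicit in the statement as written).
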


\begin{proof}
    The existence of a $\mu^\ast\in\mathcal{M}_1$ such that
    $\mathcal{OCE}(l)
    =\inf_{m\in\mathbb{R}} ( \int l(x-m)\,\mu^\ast(dx)+m)$
    follows directly from Lemma \ref{lem:sup_integr}.
    Therefore, the existence and characterization of an optimal allocation $m^*$ can be deduced from the non-robust case,
    see for instance \cite{Ben-Teb}.
\end{proof}

\begin{proof}[of Proposition \ref{prop:avar-int}]
    It follows from Lemma \ref{lem:sup_integr} and Corollary \ref{cor:when_D_directed} that one has
    $\mu^*(-\infty,t] = \inf_{\mu \in {\cal D}}\mu(-\infty,t]$ for every $t$ which implies 
    $\mathcal{V@R}_\alpha=\inf\{m \in\mathbb{R}: \mu^\ast((m,\infty))\leq\alpha\}$ and 
    $\mathcal{AV@R}_\alpha= \inf_{m\in\mathbb{R}}(\int (x-m)^+/\alpha\,\mu^\ast(dx)+m)$.
    Thus, \cite[Lemma 4.51]{foellmer11} yields
    \begin{equation*}
        \mathcal{AV@R}_\alpha
        =\frac{1}{\alpha} \int_0^\alpha \inf\{ m\in\mathbb{R} :\mu^\ast((m,\infty))\leq u\} \,du
        =\frac{1}{\alpha}\int_0^\alpha \mathcal{V@R}_u\,du.
    \end{equation*}
\end{proof}

\begin{proof}[of Example \ref{exa:dir}]
    For every $\ubar{b}\le b\le \bar{b}$, the process
    $S_t^b = S_0\exp((b - \frac{1}{2}\sigma^2)t+ \sigma W_t)$ is the solution of
    $dS_t = S_t(b\,dt + \sigma d W_t)$.
    Further, as $S_0>0$ and $f$ is strictly increasing, one has
    \begin{align}
        \label{eq:drift.uncertainty}
        P( f(S_T^{\bar{b}})\leq x)
        \leq P(f(S_T^b)\leq x)
        \leq P(f(S_T^{\ubar{b}})\le x)
    \end{align}
    for all $x$.
    Thus a straightforward computation shows that $\mathcal{D}$ satisfies \eqref{eq:ass.dir}.
\end{proof}
\begin{proof}[of Corollary \ref{cor:kusuoka}]
    By Lemma \ref{lem:sup_integr}, there is $\mu^*$ such that ${\cal AV@R}_u = \mathrm{AV@R}_u(\mu^*)$ for every $u \in (0,1]$.
    Hence, it follows by definition that 
    \begin{equation*}
        \rho({\cal D})\leq \sup_{\nu \in {\cal M}_1((0,1])}\left(\int\nolimits_{(0,1]}\mathcal{AV@R}_u\,\nu(du) - \beta(\nu) \right).
    \end{equation*}
    On the other hand, as ${\cal D}$ is closed, it holds that $\mu^* \in {\cal D}$ and thus $\sup_{\mu \in {\cal D}}\int_{(0,1]}AV@R_u(\mu)\,\nu(du)\ge \int_{(0,1]}AV@R_u(\mu^*)\,\nu(du)= \int_{(0,1]}{\cal AV@R}_u\,\nu(du)$, which proves the reverse inequality.
\end{proof}

\begin{proof}[of Theorem \ref{thm:dual}]
    Denote by 
    $\mathrm{dom}(\beta):=\{P\in\mathcal{M}_1(\Omega) : \beta(P)<\infty\}$ the domain of $\beta$
    which is a nonempty convex set by assumption. 
    Therefore the function $J$ defined by
    \[  J\colon\mathbb{R}\times\mathrm{dom}(\beta)\to\mathbb{R},
    \quad(m,P)\mapsto E_P[l(X-m)]+m - \beta(P) \]
    is convex and continuous in $m$, and concave in $P$.
    Moreover, as $X$ is a bounded random variable and 
    $l$ is bounded from below with $\lim_{x\to\infty} l(x)/x=\infty$
    by assumption, there exists some $m_0\in\mathbb{R}$ such that
    the first and last equality in the following equation hold
    \begin{align*} 
        \inf_{m\in\mathbb{R}}\sup_{P\in\mathrm{dom}(\beta)}  J(m,P)
        &=\inf_{m\in[-m_0,m_0]}\sup_{P\in\mathrm{dom}(\beta)}  J(m,P)\\
        &=\sup_{P\in\mathrm{dom}(\beta)}\inf_{m\in[-m_0,m_0]}  J(m,P)
        =\sup_{P\in\mathrm{dom}(\beta)}\inf_{m\in\mathbb{R}}  J(m,P).
    \end{align*}
    The middle equality follows from a minimax theorem, see for instance \cite[Theorem 2]{fan53}.
    Now notice that the left hand side equals $\mathcal{OCE}(X)$
    and the right hand side the supremum over $P\in\mathrm{dom}(\beta)$
    of the optimized certainty equivalent under $P$.
    In particular, it follows from the classical representation of the optimized certainty
    equivalent that
    \begin{align*} 
        \mathcal{OCE}(X)
        &=\sup_{P\in\mathrm{dom}(\beta)} \sup_{Q\in\mathcal{M}_1(\Omega)}\left( E_Q[X]-E_P\Big[ l^\ast\left(\frac{dQ}{dP}\right) \Big] - \beta(P)\right)\\
        &=\sup_{Q\in\mathcal{M}_1(\Omega)}\left( E_Q[X]-\inf_{P\in\mathcal{M}_1(\Omega)}\left(E_P\Big[ l^\ast\left(\frac{dQ}{dP}\right)\Big]+\beta(P)\right)\right).
    \end{align*}
\end{proof}

\begin{appendix}
    \section{Appendix}

    Let $X\subset\mathbb{R}^d$ be closed, denote by $B_{b-}$ the set of all measurable functions from $X$ to $(-\infty,\infty]$ which are bounded from below, and by $U_b$ and $C_b$ the subsets of bounded upper semicontinuous (resp. bounded continuous) functions.
    Further write $\mathcal{M}(X)$ for the set of all countably additive, finite, positive Borel measures on $X$,
    that is $\mathcal{M}(X):=\{t\mu :\mu\in\mathcal{M}_1(X) \text{ and } t\geq0\}$.
    The following theorem, which builds on Choquet's theory on the regularity of capacities, 
    is a slight modification of Theorem 2.2 in \cite{bartl2017robust}. 

    \begin{theorem}
        \label{thm:choquet} 
        Let $\Phi\colon B_{b-}\to(-\infty,\infty]$ be a monotone convex functional such that $\Phi(f)<\infty$
        whenever $f$ is bounded.
        If 
        \begin{itemize}
            \item
                $\Phi(f_n)\downarrow \Phi(0)$ for every sequence $(f_n)$ in $C_b$ which decreases pointwise to 0,
            \item
                $\Phi^\ast_{C_b}(\mu):=\sup_{f\in C_b} (\int f\,d\mu - \Phi(f))=
                \sup_{f\in U_b} (\int f\,d\mu - \Phi(f))=:\Phi^\ast_{U_b}(\mu)$
                for every $\mu\in \mathcal{M}(X)$,     
            \item
                $\Phi(f_n)\uparrow \Phi(f)$ for every sequence $(f_n)$ in $B_{b-}$ which increases pointwise to $f\in B_{b-}$,
        \end{itemize}
        then
        \begin{align}
            \label{eq:choquet.dual.rep}
            \Phi(f)=\sup_{\mu\in\mathcal{M}(X)} \left( \int f \,d\mu -\Phi^\ast_{C_b}(\mu)\right)     \quad \text{for every } f\in B_{b-}.
        \end{align}
    \end{theorem}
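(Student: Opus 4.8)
The plan is to establish the representation \eqref{eq:choquet.dual.rep} in three stages of increasing generality -- first on $C_b$, then on $U_b$, and finally on all of $B_{b-}$ -- closely following the strategy of \cite[Theorem 2.2]{bartl2017robust}, to which this statement is a slight modification. Throughout write $\tilde\Phi(f):=\sup_{\mu\in\mathcal{M}(X)}(\int f\,d\mu-\Phi^\ast_{C_b}(\mu))$ for the candidate right-hand side, so that the goal is to prove $\Phi=\tilde\Phi$ on $B_{b-}$. Note that $\tilde\Phi$ is, by construction, convex, monotone, and automatically continuous from below (being a supremum of integrals).

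First I would treat $f\in C_b$. Equipping $C_b$ with the supremum norm, the restriction of $\Phi$ is a proper convex functional which is finite by assumption, hence continuous and a fortiori lower semicontinuous on this Banach space, so the Fenchel--Moreau theorem yields $\Phi(f)=\sup_\ell(\ell(f)-\Phi^\ast(\ell))$ with $\ell$ ranging over the topological dual of $C_b$. Monotonicity forces every $\ell$ with $\Phi^\ast(\ell)<\infty$ to be a positive functional, and the key step is to upgrade such $\ell$ to a countably additive Radon measure: for $f_n\downarrow 0$ in $C_b$ and any $t>0$ one has $\ell(f_n)\le t^{-1}(\Phi(tf_n)+\Phi^\ast(\ell))$, whence letting $n\to\infty$ (using the first hypothesis, $\Phi(tf_n)\downarrow\Phi(0)$) and then $t\to\infty$ gives $\ell(f_n)\to 0$; by the Daniell--Stone theorem $\ell$ is thus represented by some $\mu\in\mathcal{M}(X)$. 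This establishes $\Phi=\tilde\Phi$ on $C_b$.

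To pass to $f\in U_b$, I would approximate: on the metric space $X$ every bounded upper semicontinuous function is a pointwise decreasing limit of a sequence $f_n\in C_b$ (e.g.\ via inf-convolution). The first hypothesis gives $\Phi(f_n)\downarrow\Phi(f)$, dominated convergence gives $\int f_n\,d\mu\to\int f\,d\mu$, and the conjugate-matching hypothesis $\Phi^\ast_{C_b}=\Phi^\ast_{U_b}$ is exactly what is needed to identify the representing measures and conclude $\Phi=\tilde\Phi$ on $U_b$.

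Finally, the extension from $U_b$ to $B_{b-}$ is the heart of the argument and is where Choquet's capacitability theorem enters. Interpreting $\Phi$ as a capacity, the combination of continuity from above on $C_b$ (which propagates to $U_b$ and furnishes the regularity on the ``compact'' generating class) with continuity from below on $B_{b-}$ (the third hypothesis) allows one to invoke capacitability in the form $\Phi(f)=\sup\{\Phi(g):g\in U_b,\ g\le f\}$ for every $f\in B_{b-}$. Plugging in the $U_b$-representation and interchanging the two suprema, the inner one becomes $\sup\{\int g\,d\mu:g\in U_b,\ g\le f\}=\int f\,d\mu$ by inner regularity of the Borel measure $\mu$, which delivers $\Phi(f)=\tilde\Phi(f)$. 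I expect this capacitability step to be the main obstacle: one must carefully verify that $\Phi$ meets the hypotheses of Choquet's theorem in the present setting of functions merely bounded from below (truncating by $f\wedge n\uparrow f$ and invoking continuity from below to reduce to bounded functions), and justify the interchange of suprema. The remaining verifications -- the $C_b$-representation and the $U_b$-extension -- are comparatively routine consequences of Fenchel--Moreau and monotone approximation.
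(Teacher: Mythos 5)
Your proposal and the paper part ways at the very first move: the paper does not reprove the representation on bounded measurable functions at all. Its entire proof is a two-line reduction -- the statement for bounded measurable $f$ is quoted verbatim from \cite[Theorem 2.2]{bartl2017robust}, and the only new content is the passage from bounded measurable functions to $B_{b-}$ via $f\wedge n\uparrow f$, the third hypothesis, an interchange of suprema, and monotone convergence under each $\mu$. That final truncation step is present and correct in your last paragraph, so the portion of your argument that corresponds to what the paper actually proves is fine. The rest of your proposal is an attempt to reprove the black box, and its architecture (Fenchel--Moreau plus Daniell--Stone on $C_b$, decreasing approximation to reach $U_b$, Choquet capacitability to reach measurable functions) does match the strategy of the cited reference; what it buys is self-containedness, at the cost of having to verify the capacitability hypotheses that the paper simply outsources.

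Within that reproof there is one concrete gap. In the $U_b$ step you assert that ``the first hypothesis gives $\Phi(f_n)\downarrow\Phi(f)$'' for $f_n\in C_b$ decreasing to $f\in U_b$. The first hypothesis only concerns sequences decreasing to $0$; it says nothing about decreasing sequences with a general upper semicontinuous limit, and continuity from above at such limits is not automatic for a monotone convex functional. Likewise, the conjugate-matching hypothesis $\Phi^\ast_{C_b}=\Phi^\ast_{U_b}$ delivers only the inequality $\sup_\mu(\int f\,d\mu-\Phi^\ast_{C_b}(\mu))\leq\Phi(f)$ on $U_b$; the reverse inequality is exactly the hard direction. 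The standard way to close it is to show, using the first hypothesis applied to functions of the form $tf_n$ with $f_n\downarrow 0$, that the sublevel sets $\{\mu:\Phi^\ast_{C_b}(\mu)\leq M\}$ have uniformly bounded mass and are tight, hence weakly compact by Prokhorov, and then to extract a convergent sequence of near-optimizers $\mu_n$ for $\tilde\Phi(f_n)$ and pass to the limit using weak lower semicontinuity of $\Phi^\ast_{C_b}$ and upper semicontinuity of $(\mu,n)\mapsto\int f_n\,d\mu$. This same compactness input is also what legitimises treating $\Phi$ as a capacity relative to the paving of upper semicontinuous functions in the Choquet step, so it cannot be skipped. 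If you intend the self-contained route, this argument must be supplied; if you are content to cite \cite[Theorem 2.2]{bartl2017robust} as the paper does, everything before your final truncation paragraph can be deleted.
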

    \begin{proof}
        If $B_{b-}$ is replaced by the set of all bounded measurable functions this is exactly
        the statement of \cite[Theorem 2.2]{bartl2017robust},
        so that \eqref{eq:choquet.dual.rep} holds for all bounded measurable $f$.
        For general $f\in B_{b-}$, notice that \eqref{eq:choquet.dual.rep} holds for every 
        $f\wedge n$, hence the claim follows from the third assumption 
        $\Phi(f)=\sup_n\Phi(f\wedge n)$, interchanging two suprema, and the monotone
        convergence theorem applied under each $\mu\in\mathcal{M}(X)$.
    \end{proof}

\end{appendix}

\bibliographystyle{abbrvnat}
% \bibliography{references-RobOpt}

\end{document}